\documentclass[11pt]{article}
\usepackage{amssymb}
\usepackage{amsthm}
\usepackage{latexsym}
\usepackage{amsmath}
\usepackage{graphicx}
\topmargin-1cm
\textheight 220mm \textwidth 165mm
\oddsidemargin0.2cm \evensidemargin0.2cm
\parskip0.5ex
\parindent7mm
\newtheorem{thm}{Theorem}[section]
\newtheorem{prop}[thm]{Proposition}
\newtheorem{cor}[thm]{Corollary}
\newtheorem{lem}[thm]{Lemma}
\theoremstyle{definition}
\newtheorem{defi}[thm]{Definition}
\newtheorem{exa}[thm]{Example}
\newtheorem{rem}[thm]{Remark}
\numberwithin{equation}{section}



\newcommand{\TS}{\textstyle}
\newcommand{\N}{{\mathbb N}}
\newcommand{\F}{{\mathbb F}}

\newcommand{\Z}{{\mathbb Z}}

\newcommand{\C}{{\mathbb C}}

\newcommand{\cC}{{\mathcal C}}

\newcommand{\cM}{{\mathcal M}}
\newcommand{\cP}{{\mathcal P}}

\newcommand{\cQ}{{\mathcal Q}}

\newcommand{\cR}{{\mathcal R}}

\newcommand{\cU}{{\mathcal U}}
\newcommand{\cV}{{\mathcal V}}

\newcommand{\wA}{{\widehat{A}}}
\newcommand{\wR}{{\widehat{R}}}

\newcommand{\wM}{{\widehat{M}}}
\newcommand{\lR}[1]{{\mbox{$_R{#1}$}}}
\newcommand{\rR}[1]{{\mbox{${#1}_R$}}}
\newcommand{\lp}[1]{{\mbox{$^{\perp}{#1}$}}}
\newcommand{\rp}[1]{{\mbox{${#1}^{\perp}$}}}
\newcommand{\biMod}[3]{{\mbox{$_{#1}{#2}\!{\,}_{#3}$}}}

\newcommand{\bidual}[1]{{\widehat{\phantom{\big|}\hspace*{.5em}}\hspace*{-.9em}\widehat{#1}}}
\newcommand{\wcP}{\widehat{\mathcal P}}
\newcommand{\cPHam}{\mbox{${\mathcal P}_{\rm Ham}$}}
\newcommand{\wcQ}{\widehat{\mathcal Q}}
\newcommand{\wwcP}{\widehat{\phantom{\big|}\hspace*{.5em}}\hspace*{-.9em}\wcP}
\newcommand{\wcPchil}{\wcP^{^{\scriptscriptstyle[\chi,l]}}}
\newcommand{\wcPchir}{\wcP^{^{\scriptscriptstyle[\chi,r]}}}
\newcommand{\wcQchil}{\wcQ^{^{\scriptscriptstyle[\chi,l]}}}
\newcommand{\wcQchir}{\wcQ^{^{\scriptscriptstyle[\chi,r]}}}
\newcommand{\widesim}[1][1.5]{\scalebox{#1}[1]{$\sim$}}
\newcommand{\sbt}{\raisebox{.2ex}{\mbox{$\scriptscriptstyle\bullet\,$}}}
\newcommand{\betal}{\mbox{$\beta_{\rm l}$}}
\newcommand{\betar}{\mbox{$\beta_{\rm r}$}}
\newcommand{\alphal}{\mbox{$\alpha_{\rm l}$}}
\newcommand{\alphar}{\mbox{$\alpha_{\rm r}$}}
\newcommand{\Hom}{\mbox{\rm Hom}}
\newcommand{\End}{\mbox{\rm End}}
\newcommand{\Aut}{\mbox{\rm Aut}}
\newcommand{\LT}{\mbox{\rm LT}}
\newcommand{\Mon}{\mbox{\rm Mon}}
\newcommand{\soc}{\mbox{\rm soc}}
\newcommand{\rad}{\mbox{\rm rad}}

\newcommand{\spann}{\mbox{\rm span}\,}
\newcommand{\swc}{\mbox{\rm swc}}

\newcommand{\wt}{{\rm wt}}
\newcommand{\wtH}{\mbox{{\rm wt}$_{\rm H}$}}
\newcommand{\wtN}{\mbox{{\rm wt}$_{\rm N}$}}
\newcommand{\wtRT}{\mbox{{\rm wt}$_{\rm RT}$}}
\newcommand{\wtRTr}{\mbox{{\rm wt}$_{{\rm RT},r}$}}

\newcommand{\T}{\mbox{$\!^{\sf T}$}}
\newcommand{\ideal}[1]{\mbox{$\langle{#1}\rangle$}}
\newcommand{\GL}{\mathrm{GL}}
\newcommand{\supp}{\mathrm{supp}}
\newcommand{\ev}{\mathrm{ev}}

\usepackage{hyperref}
\usepackage{accents}
\newlength{\dhatheight}


\newcommand{\Smallfourmat}[4]{\mbox{$\left(\begin{smallmatrix}{#1}&{#2}\\{#3}&{#4}\end{smallmatrix}\right)$}}


\newcounter{alp}
\newcounter{ara}
\newcounter{rom}
\newenvironment{romanlist}{\begin{list}{(\roman{rom})\hfill}{\usecounter{rom}
     \topsep0ex \labelwidth.7cm \leftmargin.7cm \labelsep0cm
     \rightmargin0cm \parsep0ex \itemsep.4ex
     \partopsep1ex}}{\end{list}}
\newenvironment{alphalist}{\begin{list}{(\alph{alp})\hfill}{\usecounter{alp}
     \topsep0ex \labelwidth.6cm \leftmargin.6cm \labelsep0cm
     \rightmargin0cm \parsep0ex \itemsep0ex
     \partopsep0ex}}{\end{list}}
\newenvironment{arabiclist}{\begin{list}{(\arabic{ara})\hfill}{\usecounter{ara}
     \topsep0ex \labelwidth.6cm \leftmargin.6cm \labelsep0cm
     \rightmargin0cm \parsep0ex \itemsep0ex
     \partopsep1.6ex}}{\end{list}}

\title{Extension Theorems for Various Weight Functions\\ over Frobenius Bimodules}
\date{\today}
\author{Heide Gluesing-Luerssen$^*$
and Tefjol Pllaha\footnote{HGL was partially supported by the National
Science Foundation Grant DMS-1210061 and by the grant \#422479 from the Simons Foundation.
HGL and TP are with the Department of Mathematics, University of Kentucky, Lexington KY 40506-0027, USA;
\{heide.gl, tefjol.pllaha\}@uky.edu.}}

\begin{document}
\maketitle

{\bf Abstract:}
In this paper we study codes where the alphabet is a finite Frobenius bimodule over a finite ring.
We discuss the extension property for various weight functions.
Employing an entirely character-theoretic approach and a duality theory for partitions on Frobenius bimodules we derive
alternative proofs for the facts that the Hamming weight and the homogeneous weight satisfy the extension property.
We also use the same techniques to derive the extension property for other weights, such as the Rosenbloom-Tsfasman weight.

{\bf Keywords:} Frobenius bimodules, codes, weight functions, extension property.

{\bf MSC (2010):} 94B05, 16L60, 16P10

\section{Introduction}
We discuss the extension property in the sense of MacWilliams for weight-preserving maps.
The extension property can be described most generally as follows:
given a substructure (a code)~$\cC$ of some~$M^n$, where~$M$ is a finite ring or module, and a
homomorphism $f:\cC\longrightarrow M^n$ that preserves a certain weight function, such as the Hamming weight or some other property, does
this map extend to a map on the entire space~$M^n$ that still preserves the given weight or property?
For the Hamming weight this question has been settled in the affirmative for fields by MacWilliams~\cite{MacW62},
for finite Frobenius rings by Wood~\cite{Wo97,Wo99} and for finite Frobenius bimodules over a finite ring by
Greferath et al.~\cite{GNW04}.
In each of these cases the Hamming weight-preserving map is given by a monomial transformation.
In addition, some other weight functions have been investigated in the above-mentioned literature as well and according extension
results have been established.
In~\cite{BGL15} Barra et al.\ consider finite Frobenius rings and use a different approach, called the local-global principle,
in order to provide further cases where maps with a particular pointwise (local) property can be extended while preserving said property.
A main tool is the notion of partitions of the ambient space and their character-theoretic dual.

In this paper we focus on finite Frobenius bimodules,~$M$, over a finite ring,~$R$, and lend some tools from~\cite{BGL15} in order to
discuss the extension property for various weight functions.
This allows us to consolidate the above-mentioned results into a unified framework that is solely based on character theory.
In addition, we discuss some other weight functions  that satisfy the extension property.
Our main tools are partitions on~$M^n$ and their character-theoretic duals, which are partitions on~$R^n$.
Special emphasis is put on orbit partitions induced by group actions on~$M^n$.
Their dual partitions turn out to be the orbit partitions on~$R^n$ of the `transposed group action'.
We will then use this duality in two different ways.
Firstly, for certain weights the weight-preserving maps give rise to pointwise expressions in terms of an associated matrix group.
If the group is nicely structured, character sums allow us to turn this into a uniform matrix from that group representing the map, thus
establishing the extendability of the map.
For other types of weights, such as the Hamming weight, this approach does not work as the associated matrix group, i.e., monomial matrices, is not suitably structured. Instead we make use of the fact that any weight-preserving map preserves the associated weight partition and,
again with the aid of character sums and the duality theory for partitions, we obtain the desired uniform matrix.

The paper is organized as follows.
In Section~\ref{S-basics} we derive basic properties of finite Frobenius bimodules over finite
rings solely based on a character-theoretic approach.
In particular, we derive the well-known double annihilator properties.
Section~\ref{S-Partitions} is devoted to partitions on~$M^n$, where~$M$ is a Frobenius bimodule over a finite ring.
We introduce a character-theoretic dualization of such partitions.
A special case is given by orbit partitions induced by group actions.
In Section~\ref{S-Extensions} we turn to the extension property for various weight functions.
Using the dualization technique for partitions, we derive the well-known extension property for the Hamming weight and the homogeneous weight.
With the same technique we also establish the extension property for the Rosenbloom-Tsfasman weight and some other weight-like functions.
In Remark~\ref{R-OtherWeights} we summarize how the techniques may be used to discuss the extension property for other weights
such as the big class of poset weights.
With the methods developed thus far, this turns out to be entirely analogous to~\cite{BGL15} where the same question has been studied
for codes over finite Frobenius rings.
Finally, in the last section we establish the extension property for the Rosenbloom-Tsfasman weight applied to $\F$-linear maps of
$\F$-subspaces of some $\hat{\F}^n$, where $\hat{\F}$ is a field extension of~$\F$.
Even though $\hat{\F}$ is not a Frobenius bimodule over~$\F$, this case  can be dealt with using the methods developed in this paper.

\section{Finite Frobenius Rings and Bimodules}\label{S-basics}
In this section we derive the basic properties of finite Frobenius bimodules over finite rings.
These properties are well-known (see, for instance,~\cite{GNW04} by Greferath et al.\ and~\cite{Wo09} by Wood)
but we will give a different, purely character-theoretic approach and establish the results without resorting to quasi-Frobenius
bimodules.

We begin with the character group of a finite abelian group.
Let~$A$ be a finite abelian group.
Its character group is defined as the set $\widehat{A} := \text{Hom}(A, \mathbb{C}^*)$ of all group homomorphisms from~$A$ to~$\C^*$
endowed with addition $(\chi_1 + \chi_2)(a) = \chi_1(a)\chi_2(a)$ for all $\chi_i\in\widehat{A}$ and $a\in A$.
Then $\widehat{A}$ is an abelian group.
Its zero element is $\varepsilon \in \widehat{A}$ given by $\varepsilon(a) =1$ for all $a\in A$.
Elements of $\widehat{A}$ are called \textit{characters} and $\varepsilon$ is the \textit{principal character} of $A$.
We list some basic properties of the character group. They are well-known and/or can easily be verified.

\begin{rem}\label{R-CharGroup}
Let~$A$ be a finite abelian group.
\begin{arabiclist}
\item $A$ is isomorphic to~$\widehat{A}$ (though not naturally so) and hence $|A| = |\widehat{A}|$.
\item $\widehat{A_1}\times \widehat{A_2}\cong\widehat{A_1\times A_2}$ for any finite abelian
      groups~$A_1$ and~$A_2$. The isomorphism is facilitated by $(\chi_1,\chi_2)(a_1,a_2):=\chi_1(a_1)\chi_2(a_2)$.
\item $A$ and $\bidual{A}$ are naturally isomorphic via the map $\zeta_A: a\longmapsto\ev_a$, where $\ev_a: \widehat{A}\longrightarrow\C^*$,
      $\chi\longmapsto\chi(a)$ denotes the evaluation map.
\item  $\sum_{a\in A}\varepsilon(a)=|A|$ and $\sum_{a\in A}\chi(a)=0$ for $\chi\in\widehat{A}\setminus\{\varepsilon\}$.
\item Distinct characters of~$A$ are linearly independent in the $\C$-vector space of maps from~$A$ to~$\C$.
\item Let $\chi_1, \ldots, \chi_N$ and $\chi'_1, \ldots , \chi'_M$ be characters of $A$.
     If $\sum_{i=1}^N \chi_i = \sum_{i=1}^M \chi'_i$
     as maps from~$A$ to~$\C$, then the multisets $\{\!\{\chi_1, \ldots, \chi_N\}\!\}$ and
     $\{\!\{\chi'_1, \ldots , \chi'_M\}\!\}$ coincide, see \cite[Prop.~3.1]{BGL15}.
\item Let $B\leq A$ and $C\leq\wA$ be subgroups of~$A$ and~$\wA$, respectively.
      Their \emph{dual groups} are defined as
      $B^\circ:=\{\chi\in\wA\mid B\subseteq\ker\chi\}$ and
      $C^\circ:=\{a\in A\mid a\in\ker\chi\text{ for all }\chi\in C\}$,
      where for $\chi\in\widehat{A}$ we set $\ker\chi=\{x\in A\mid \chi(x)=1\}$.
      Clearly,~$B^\circ$ and~$C^\circ$ are subgroups of~$\wA$ and~$A$, respectively.
      Then
      \begin{romanlist}
      \item $B^\circ\cong \widehat{A/B}$, thus $|B^\circ|=|A|/|B|$.
      \item If $B\subseteq\ker\chi$ for all $\chi\in\wA$, then $B=\{0\}$.
      \item $(B^\circ)^\circ=B$ and $(C^\circ)^\circ=C$.
      \end{romanlist}
\end{arabiclist}
\end{rem}

From now on let~$R$ be a finite ring with identity. Consider the character group
$\widehat{R}$ of the abelian group $(R,+)$.
Then $\wR$ is an $(R,R)$-bimodule with the left and right scalar multiplication
\begin{equation}\label{e-bimodstruc}
    (r\!\cdot\!\chi)(a) = \chi(ar)\ \text{  and }\ (\chi\!\cdot\!r)(a) = \chi(ra) \text{ for all }\chi\in\wR\text{ and }a,r\in R.
\end{equation}
Remark~\ref{R-CharGroup}(3) yields $\bidual{R}\cong R$ as $(R,R)$-bimodules.
Furthermore, the dual groups from Remark~\ref{R-CharGroup}(7) of left (resp.\ right) submodules are
right (resp.\ left) submodules.

\begin{prop}\label{P-rwR}
Let~$R$ be any finite ring with identity and $r\in R$ such that $r\wR =\{\varepsilon\}$ or $\wR r=\{\varepsilon\}$. Then $r=0$.
In other words, $\lR{\wR}$ and $\rR{\wR}$ are faithful.
\end{prop}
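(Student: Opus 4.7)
The plan is to reduce the statement directly to Remark~\ref{R-CharGroup}(7)(ii), which says that a subgroup of $(R,+)$ lying in the kernel of every character must be $\{0\}$. The bimodule structure \eqref{e-bimodstruc} converts the hypothesis on $r\wR$ or $\wR r$ into a condition about characters vanishing on a certain additive subgroup of $R$.

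Concretely, suppose $r\wR=\{\varepsilon\}$. Unwinding the left action, this means that for every $\chi\in\wR$ and every $a\in R$ we have $(r\!\cdot\!\chi)(a)=\chi(ar)=1$. Fixing $\chi$ and letting $a$ range over $R$, this says $Rr\subseteq\ker\chi$; since $\chi$ was arbitrary, the additive subgroup $Rr\le(R,+)$ lies in the kernel of every character of~$R$. By Remark~\ref{R-CharGroup}(7)(ii) we conclude $Rr=\{0\}$, and taking $a=1$ yields $r=0$. The case $\wR r=\{\varepsilon\}$ is symmetric: unwinding the right action gives $\chi(ra)=1$ for all $\chi\in\wR$ and $a\in R$, hence $rR\subseteq\ker\chi$ for all $\chi$, so $rR=\{0\}$ and again $r=0$ by taking $a=1$.

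There is no real obstacle here; the only thing to be careful about is matching the sides correctly in~\eqref{e-bimodstruc} so that the hypothesis on the bimodule action translates into a statement about the correctly-sided principal ideal, and then invoking the double-annihilator-style fact from Remark~\ref{R-CharGroup}(7)(ii) which is already available at this point in the paper.
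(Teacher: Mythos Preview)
Your proof is correct and follows essentially the same approach as the paper: unwind the module action via~\eqref{e-bimodstruc} to get $Rr\subseteq\ker\chi$ for all $\chi\in\wR$, then apply Remark~\ref{R-CharGroup}(7)(ii). The only difference is that you spell out the symmetric case and the final step $1\cdot r=r$ explicitly, whereas the paper handles both by a ``without loss of generality'' remark.
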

\begin{proof}
Without loss of generality assume $r\wR=\{\varepsilon\}$.
Then $1=(r\chi )(s)=\chi(sr)$ for all $\chi\in\wR$ and $s\in R$.
Thus $Rr\subseteq\ker\chi$ for all $\chi\in\wR$.
Remark~\ref{R-CharGroup}(7)(ii) concludes the proof.
\end{proof}

The isomorphism in Proposition~\ref{P-RCongEndRhat} below will be crucial for dealing with Frobenius bimodules.
It tells us that all left-linear maps on~$\wR$ are given by right multiplication with a ring element.
For this reason we will identify the rings~$R$ and $\End(\lR{\wR})$.
The result appears also in \cite[Thm.~2.4]{GNW04}.
For the sake self-containedness we provide an elementary proof.
First a convention.
\begin{rem}\label{R-EndM}
Let $\lR{M}$ be a left $R$-module.
We will adopt the convention that multiplication in the endomorphism ring $S:=\End(\lR{M})$
is defined as $f\!\cdot\!g:=g\circ f$, where the latter means $(g\circ f)(v)=g(f(v))$ for $v\in M$.
We will not adhere to the convention of using right operators for left linear maps because of possible confusion when composing
left and right linear maps.
Thus, composition of maps will always mean $(g\circ f)(v)=g(f(v))$,
and multiplication in the endomorphism ring reverses the order.
This turns $M$ into a right $S$-module via $v\cdot\!f:=f(v)$.
\end{rem}

\begin{prop}\label{P-RCongEndRhat}
For any finite ring~$R$ with identity the map
\[
   \Theta:R\longrightarrow \End(\lR{\wR}),\ r\longmapsto
   \left\{\begin{array}{rcl}\Theta(r):\lR{\wR}&\longrightarrow&\lR{\wR}\\ \chi&\longmapsto&\chi r\end{array}\right\}
\]
is an isomorphism of rings and of right $R$-modules.
By symmetry, $R\cong\End(\rR{\wR})$ as rings and left $R$-modules.
As a consequence, the bimodule $\biMod{R}{\wR}{R}$ is balanced in the sense of~\cite[Sec.~1]{GNW04}.
\end{prop}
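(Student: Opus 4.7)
The plan is to handle three tasks separately: verifying that $\Theta$ is a well-defined ring homomorphism that is also right $R$-linear, that $\Theta$ is injective, and that $\Theta$ is surjective. Only the last carries any real content.

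Well-definedness and the ring/module compatibilities follow by routine unwinding of the bimodule identities~\eqref{e-bimodstruc}. For example, left $R$-linearity of $\Theta(r)$ is the statement $(s\chi)r = s(\chi r)$, which reduces to associativity in $R$; the identity $\Theta(rs) = \Theta(r)\cdot\Theta(s)$, with multiplication read in the reversed-composition sense of Remark~\ref{R-EndM}, is the associativity $\chi(rs) = (\chi r)s$; and right $R$-linearity with respect to the natural right $R$-module structure on $\End(\lR{\wR})$ inherited from $\wR$ (namely $(g\cdot r)(\chi) := g(\chi)\cdot r$) is again this associativity. Injectivity is immediate from Proposition~\ref{P-rwR}: if $\Theta(r) = 0$, then $\wR\cdot r = \{\varepsilon\}$, forcing $r = 0$.

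The main obstacle is surjectivity, which I plan to settle by a cardinality argument: once $|\End(\lR{\wR})|\le|R|$ is established, the injective ring homomorphism $\Theta$ is forced to be a bijection. To bound the left-hand side I would apply the contravariant character duality to $\wR$ itself. For every $f\in\End(\lR{\wR})$, define its adjoint $f^\ast\colon\bidual{R}\to\bidual{R}$ by $f^\ast(\psi) = \psi\circ f$; a short computation shows that left $R$-linearity of $f$ translates into right $R$-linearity of $f^\ast$, so $f^\ast\in\End(\rR{\bidual{R}})$. The assignment $f\mapsto f^\ast$ is injective: if $f^\ast = 0$ then $\psi(f(\chi)) = 1$ for every $\psi\in\bidual{R}$ and $\chi\in\wR$, so for each fixed $\chi$ the element $f(\chi)\in\wR$ lies in $\ker\psi$ for all $\psi\in\bidual{R}$, and Remark~\ref{R-CharGroup}(7)(ii) applied to the abelian group $\wR$ then forces $f(\chi) = \varepsilon$.

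To convert the resulting inequality $|\End(\lR{\wR})|\le|\End(\rR{\bidual{R}})|$ into a bound in $|R|$, I would verify that the evaluation map $\zeta_R\colon R\to\bidual{R}$ of Remark~\ref{R-CharGroup}(3) is in fact an isomorphism of $(R,R)$-bimodules --- this follows from two applications of~\eqref{e-bimodstruc}. Transporting along $\zeta_R$ yields $\End(\rR{\bidual{R}})\cong\End(\rR{R})$, and the latter is in evident bijection with $R$ via left multiplication. This gives $|\End(\lR{\wR})|\le|R|$, completing surjectivity. The ``by symmetry'' claim for $\End(\rR{\wR})$ is obtained by running the entire argument with the roles of left and right swapped, and balancedness of $\biMod{R}{\wR}{R}$ is just the conjunction of the two isomorphisms.
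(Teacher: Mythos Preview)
Your argument is correct, but it takes a different route to surjectivity than the paper's proof. Where you establish surjectivity indirectly by bounding $|\End(\lR{\wR})|$ via the injective adjoint map $f\mapsto f^\ast$ into $\End(\rR{\bidual{R}})\cong\End(\rR{R})\cong R$, the paper argues constructively: given $f\in\End(\lR{\wR})$, it observes that $\ev_1\circ f\in\bidual{R}$, and since $\bidual{R}=\{\ev_r\mid r\in R\}$ by Remark~\ref{R-CharGroup}(3), there is some $r$ with $\ev_1\circ f=\ev_r=\ev_1\circ\Theta(r)$. This says $(f(\chi))(1)=(\chi r)(1)$ for all $\chi$; replacing $\chi$ by $s\chi$ and using left $R$-linearity of $f$ promotes this to $(f(\chi))(s)=(\chi r)(s)$ for all $s$, whence $f=\Theta(r)$.

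The paper's approach is shorter and has the advantage of exhibiting the preimage $r$ explicitly, avoiding any appeal to $\End(\rR{R})\cong R$. Your approach, on the other hand, packages the same ingredients (the bidual isomorphism and separation of points by characters) into a clean cardinality count and makes the contravariant duality $\End(\lR{\wR})\hookrightarrow\End(\rR{\bidual{R}})$ visible as a structural fact. Note that your adjoint evaluated at the single element $\psi=\ev_1$ is exactly the map the paper uses, so the two arguments are close cousins; the paper simply exploits that $\ev_1$ alone already separates endomorphisms once left linearity is brought in.
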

\begin{proof}
First of all, it is clear that the map $\chi\mapsto\chi r$ is indeed in $\End(\lR{\wR})$.
One also easily verifies that~$\Theta$ is a ring homomorphism as well as right $R$-linear.
Injectivity of~$\Theta$ is immediate from Proposition~\ref{P-rwR}.
For surjectivity we make use of the evaluation character on~$\wR$; see Remark~\ref{R-CharGroup}(3).
Note that for any $r\in R$ we have $\ev_1\circ\Theta(r)=\ev_r$.
Thus Remark~\ref{R-CharGroup}(3) implies $\bidual{R}=\{\ev_1\circ\Theta(r)\mid r\in R\}$.
Let now $f\in \End(\lR{\wR})$.
Then $\ev_1\circ f\in\bidual{R}$, and hence $\ev_1\circ f=\ev_1\circ\Theta(r)$ for some $r\in R$.
This means $\big(f(\chi)\big)(1)=(\chi r)(1)$ for all $\chi\in\wR$.
Applying this to the characters $s\chi$ for any $s\in R$ and using the left linearity of~$f$ and the left module structure
of~$\wR$ we obtain $\big(f(\chi)\big)(s)=(\chi r)(s)$ for all $s\in R$.
Therefore $f(\chi)=\chi r$, which in turn yields $f=\Theta(r)$, as desired.
\end{proof}

Recall that~$R$ is a Frobenius ring if $_R\soc(_R R)\cong\, _R(R/\rad(R))$,
where $\soc(_R R)$ denotes the socle of the left~$R$-module~$R$
and $\rad(R)$ is the Jacobson radical of~$R$.
Honold~\cite[Thm.~2]{Hon01} showed that the existence of a left isomorphism implies a right analogue.
It follows from Lamprecht~\cite{Lamp53} (see also Hirano~\cite[Th.~1]{Hi97}, Honold~\cite[p.~409]{Hon01}, and
Wood~\cite[Th.~3.10]{Wo99}) that the Frobenius property of finite rings can also be characterized via the character module.
Since this is all we need in this paper, we will use this as our definition of finite Frobenius rings.

\begin{defi}\label{D-FrobRing}
A finite ring $R$ with identity is called \emph{Frobenius} if $\lR{R}\cong\lR{\wR}$, that is,
if there exists $\chi \in \widehat{R}$ such that the map
$r \mapsto r\chi$ is an isomorphism of left $R$-modules.
In this case~$\chi$ is called a \emph{generating character of~$R$}.
Stated differently,~$R$ is Frobenius with generating character~$\chi$ iff $_R\wR$ is a free $R$-module with basis $\{\chi\}$.
\end{defi}
The same references as above also show that $\{\chi\}$ is a basis of $\lR{\wR}$ if and only if $\{\chi\}$ is a basis of $\rR{\wR}$.
As a consequence, $\lR{R}\cong \lR{\wR}\Longleftrightarrow \rR{R}\cong\rR{\wR}$ and there is no need of specifying the sidedness in the
definition of Frobenius and generating characters.

Examples of finite Frobenius rings are finite fields, integer residue rings $\mathbb{Z}_N:=\Z/N\Z$, finite chain rings as well as
matrix rings $R^{n\times n}$ and finite group rings $R[G]$ over Frobenius rings~$R$.
Direct products of finite Frobenius rings are Frobenius.
A simple example of a non-Frobenius ring is the (commutative) ring $R=\F_2[x,y]/(x^2,y^2,xy)$, see~\cite[Ex.~3.2]{ClGo92}.

We now turn to modules.
Let~$M$ be a finite left $R$-module.
Its underlying abelian group $(M,+)$ gives rise to the character group $\wM$,
which is endowed with a right $R$-module structure via $(\chi\!\cdot\!r)(v) = \chi(rv)$ for all $\chi\in\wM,\,r\in R,\,v\in M$.
Similarly, for a right $R$-module~$M$ the character group~$\wM$ carries a left $R$-module structure via $(r\!\cdot\!\chi)(v)=\chi(vr)$.
As for rings, we have for any module~$\lR{M}$ the canonical isomorphism
\begin{equation}\label{e-bidualM}
  \zeta_M:\lR{M}\longmapsto\lR{\widehat{\phantom{\big|}\hspace*{.8em}}\hspace*{-1.2em}\widehat{M}}\ \text{ via }\
   v\longmapsto \left\{\begin{array}{rcl}\ev_v:\wM&\longrightarrow&\C^*\\ \chi&\longmapsto&\chi(v)\end{array}\right.
\end{equation}
and analogously for right modules.

The following result, going back to Bass' Theorem, will be crucial at several instances.
\begin{thm}[\mbox{\cite[Prop.~5.1]{Wo99}}]\label{T-LeftSim}
Let~$R$ be any finite ring with identity and~$M$ a finite left $R$-module. Let $v,w\in M$ be such that $Rw=Rv$.
Then there exists a unit $\alpha\in R^*$ such that $w=\alpha v$.
\end{thm}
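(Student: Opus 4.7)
The plan is to reduce the statement to the well-known fact that finite (hence semilocal) rings satisfy the left stable range one property. The argument then falls naturally into three steps.

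First I would pin down the annihilators. Since $Rv=Rw$, the two surjective left-linear maps $R\to Rv$, $r\mapsto rv$ and $R\to Rw$, $r\mapsto rw$ have the same image and kernels $\mathrm{ann}(v)$ and $\mathrm{ann}(w)$ respectively; comparing cardinalities of the resulting quotients forces $\mathrm{ann}(v)=\mathrm{ann}(w)$. Call this common left ideal~$I$. Choose any $\alpha_0\in R$ with $w=\alpha_0 v$. The problem now reduces to finding $i\in I$ with $\alpha_0+i\in R^*$, since then $\alpha:=\alpha_0+i$ satisfies $\alpha v=\alpha_0 v+iv=w$.

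Second I would translate the hypothesis into a unimodularity condition. Pick also $\beta_0\in R$ with $v=\beta_0 w$. Then $v=\beta_0\alpha_0 v$, so $1-\beta_0\alpha_0\in I$, which is precisely $R\alpha_0+I=R$. Equivalently, $\alpha_0+I$ generates $R/I$ as a cyclic left $R$-module, just as $1+I$ does.

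The main step is the purely ring-theoretic claim: whenever $R\alpha_0+I=R$ in a finite ring~$R$, the coset $\alpha_0+I$ contains a unit of~$R$. For this I would reduce modulo the Jacobson radical $J=J(R)$. Since~$R$ is finite, $J$ is nilpotent and $\bar R:=R/J$ is a finite semisimple ring, and nilpotence of~$J$ yields $1+J\subseteq R^*$; hence $\alpha_0+i$ is a unit in~$R$ iff its image is a unit in~$\bar R$. In~$\bar R$, which by Artin--Wedderburn is a finite product of matrix rings over finite fields, the desired conclusion follows from a direct idempotent-and-rank computation inside each matrix factor: one uses that $\bar R\bar\alpha_0+\bar I=\bar R$ admits a decomposition $\bar R=A''\oplus\bar I$ with $A''\subseteq\bar R\bar\alpha_0$ to produce $\bar i\in\bar I$ making $\bar\alpha_0+\bar i$ invertible. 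Any lift of~$\bar i$ to a representative $i\in I$ then finishes the proof. The principal obstacle throughout is precisely this last idempotent manipulation in the semisimple quotient; everything before it is routine bookkeeping with cyclic modules and annihilators.
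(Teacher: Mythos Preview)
The paper does not include its own proof of this theorem; it simply cites \cite[Prop.~5.1]{Wo99} and remarks that the result goes back to Bass' Theorem. Your approach via the stable range one property of finite (hence semilocal) rings is exactly the standard route behind that citation, and your outline of the main step---reduction modulo the Jacobson radical followed by an idempotent/rank argument in the semisimple quotient---is correct.

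One slip worth flagging: the claim that $\mathrm{ann}(v)=\mathrm{ann}(w)$ does not follow from equality of cardinalities, and is false in general. For instance, take $R=\cM_2(\F_2)$ acting on the column space $M=\F_2^{\,2}$ and let $v=(1,0)\T$, $w=(0,1)\T$; then $Rv=Rw=M$, but $\mathrm{ann}(v)$ and $\mathrm{ann}(w)$ are the distinct left ideals consisting of matrices with vanishing first, respectively second, column. Fortunately your argument never actually uses this equality: setting $I:=\mathrm{ann}(v)$ alone suffices, since $(\alpha_0+i)v=w$ only requires $iv=0$, and the computation $v=\beta_0\alpha_0 v$ already gives $1-\beta_0\alpha_0\in\mathrm{ann}(v)$, hence $R\alpha_0+I=R$. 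With that cosmetic adjustment the proof is complete.
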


Let now $M$ be an $(R,R)$-bimodule.
As with Frobenius rings, the following definition of Frobenius bimodules differs from the usual one in the literature,
see for instance~\cite{GNW04}.
It does, however, also appear in \cite[Sec.~5.2]{Wo09}.
In particular, the definition below will not explicitly resort to quasi-Frobenius bimodules.
We will see later that the following definition is indeed equivalent to the one given in \cite[Def.~2.16]{GNW04}.

\begin{defi}\label{D-FrobBimod}
Let~$R$ be any finite ring with identity and~$M$ a finite $(R,R)$-bimodule.
Then~$M$ is called a \emph{Frobenius bimodule} if $\lR{M} \cong\lR{\wR}$ and $\rR{M} \cong \rR{\wR}$.
\end{defi}
Clearly, a finite ring~$R$ is a Frobenius $(R,R)$-bimodule if and only if it is a Frobenius ring.
Moreover,~$\biMod{R}{\wR}{R}$ is a Frobenius bimodule for any finite ring~$R$.
Recalling that $\biMod{R}{\bidual{R}}{R}\cong\biMod{R}{R}{R}$ for any finite ring~$R$ we conclude that if~$R$ is not a
Frobenius ring then $\biMod{R}{M}{\!R}$ being Frobenius does not imply that
$\biMod{R}{\wM}{\!R}$ is Frobenius.
In general the condition of~$M$ being a Frobenius bimodule is quite restrictive because it implies that $|M|=|\wR|=|R|$.
In particular, a vector space over a finite field~$\F$ is a Frobenius bimodule over~$\F$ iff it is one-dimensional.
One should note that a Frobenius bimodule need not be isomorphic to~$\wR$ as an $(R,R)$-bimodule.

In order to discuss some properties of the structure of Frobenius bimodules we start with the following consideration.

\begin{rem}\label{R-GenCharMod}
Let~$R$ be a finite ring with identity and~$\lR{M}$ be a finite  left $R$-module.
Suppose there exists an isomorphism $\lambda:\lR{M}\longrightarrow\lR{\wR}$.
Then one easily verifies that the induced map
\[
   \rR{R}\longrightarrow\rR{\wM},\ r\longmapsto\zeta_R(r)\circ\lambda=\ev_r\circ\lambda, \text{ where }
   \zeta_R:\rR{R}\longrightarrow\bidual{R}{\,}_{\!R}\text{ is as in~\eqref{e-bidualM}},
\]
is an isomorphism of right $R$-modules.
As a consequence, the module $\rR{\wM}$ is free of rank~$1$.
Any basis vector of $\rR{\wM}$ is called a \emph{right generating character} of~$M$.
The above shows that the right generating characters of~$M$ are given by $\zeta_R(u)\circ\lambda=\ev_u\circ\lambda$ where $u\in R^*$.
Analogous results are true for right $R$-modules.
Thus, all of the above shows that a finite Frobenius bimodule $\biMod{R}{M}{\!R}$ has a left generating character and a right
generating character.
It is a consequence of Theorem~\ref{T-LeftSim} (and can also easily be verified using the above) that if~$\chi$ and~$\chi'$ are right (resp.\ left)
generating characters of~$M$, then there exists a unit $u\in R^*$ such that
$\chi'=\chi u$ (resp.\ $\chi'=u\chi$).
\end{rem}

In the following theorem we list some basic facts about generating characters of Frobenius bimodules.
Recall the kernel of a character from Remark~\ref{R-CharGroup}(7).
The equivalences can be found in \cite[Lem.~5.2, Lem.~5.3, Cor.~5.1]{Wo09}.
The last part follows easily from~(iii).

\begin{thm}\label{T-FrobBimod}
Let~$\biMod{R}{M}{\!R}$ be a Frobenius bimodule and let~$\chi\in\wM$.
The following are equivalent.
\begin{romanlist}
\item $\chi$ is a left generating character of~$M$, i.e., $\lR{\wM}=R\chi$.
\item $\chi$ is a right generating character of~$M$, i.e., $\rR{\wM}=\chi R$.
\item $\ker\chi$ contains no nonzero left submodule of~$M$.
\item $\ker\chi$ contains no nonzero right submodule of~$M$.
\end{romanlist}
Furthermore, if $\chi$ is a generating character of~$M$ and~$V$ is any left $R$-module then the map
$\Hom(\lR{V},\lR{M}) \longrightarrow\widehat{V},\ g \longmapsto \chi\circ g$ is an injective group homomorphism.
\end{thm}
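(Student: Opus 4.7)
I would establish the four equivalences through a cyclic chain (i) $\Rightarrow$ (iv) $\Rightarrow$ (ii) $\Rightarrow$ (iii) $\Rightarrow$ (i), the point being that the left $R$-action on $\wM$ is encoded by the right $R$-action on $M$ via $(r\chi)(v) = \chi(vr)$, and symmetrically $(\chi r)(v) = \chi(rv)$. Before entering the cycle I would record that since $\biMod{R}{M}{R}$ is Frobenius, both $\lR{M}$ and $\rR{M}$ are faithful (being isomorphic to $\lR{\wR}$ and $\rR{\wR}$, which are faithful by Proposition~\ref{P-rwR}), and $|R|=|\wR|=|M|=|\wM|$.

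For (i) $\Rightarrow$ (iv): given a right submodule $N \subseteq \ker\chi$ and any $\psi = r\chi \in R\chi = \wM$, one has $\psi(n)=\chi(nr)=1$ for every $n\in N$, because $nr\in N\subseteq\ker\chi$. So every character of $M$ vanishes on $N$, forcing $N=\{0\}$ by Remark~\ref{R-CharGroup}(7)(ii). For (iv) $\Rightarrow$ (ii), I would consider the right $R$-linear map $R \to \rR{\wM}$, $r \mapsto \chi r$. Its kernel equals $\{r\in R: rM\subseteq\ker\chi\}$, and $rM$ is a right submodule of $M$ by bimodule associativity. Hypothesis (iv) then forces $rM=\{0\}$, whence $r=0$ by faithfulness of $\lR{M}$. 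The map is therefore injective, and bijective by $|R|=|\wM|$; thus $\chi R = \wM$, which is (ii). The remaining implications (ii) $\Rightarrow$ (iii) and (iii) $\Rightarrow$ (i) go through symmetrically, swapping ``left'' and ``right'' throughout and invoking the faithfulness of $\rR{M}$ in place of $\lR{M}$; in particular $r\chi=\varepsilon$ rewrites as $Mr\subseteq\ker\chi$, and $Mr$ is then a left submodule.

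For the additional claim, the map $g \mapsto \chi\circ g$ is a group homomorphism because $\chi$ is itself a character, so $\chi\circ(g_1+g_2)=(\chi\circ g_1)(\chi\circ g_2)$ pointwise on $V$. For injectivity, if $\chi\circ g=\varepsilon$ then $g(V)\subseteq\ker\chi$; since $g(V)$ is a left submodule of $\lR{M}$, the already-proven condition (iii) applied to the generating character $\chi$ forces $g(V)=\{0\}$ and hence $g=0$.

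The main obstacle I anticipate is purely bookkeeping: at each arrow of the cycle one must be careful whether it is $rM$ or $Mr$ that appears, whether that set is a left or right submodule, and correspondingly which of the two faithfulness statements can be invoked. Getting these sides right is what makes the cycle actually close rather than collapse into a shorter loop, and it is the whole reason the equivalence ``left generating $\Leftrightarrow$ right generating'' holds.
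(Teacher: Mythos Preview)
Your proof is correct. The paper itself does not give a detailed argument for the four equivalences; it simply cites \cite[Lem.~5.2, Lem.~5.3, Cor.~5.1]{Wo09} and remarks that the final claim about $g\mapsto\chi\circ g$ follows easily from~(iii), which is precisely how you argue it. Your cyclic chain (i)$\Rightarrow$(iv)$\Rightarrow$(ii)$\Rightarrow$(iii)$\Rightarrow$(i), driven by the interplay between the left/right action on~$\wM$ and the right/left action on~$M$, together with faithfulness of $\lR{M}$ and $\rR{M}$ inherited from Proposition~\ref{P-rwR} via the Frobenius isomorphisms, is the standard route and is essentially what one finds in Wood's cited lemmas. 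So you have supplied a clean self-contained proof where the paper is content to cite the literature; there is no discrepancy in approach.
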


Now we can fix particular isomorphisms.
From now one let $M=\biMod{R}{M}{\!R}$ be a finite Frobenius bimodule with generating character~$\chi$ over the finite ring~$R$.
Then one easily verifies that the maps
\begin{equation}\label{e-beta}
  \betal:\lR{M}\longrightarrow \lR{\wR},\ v\longmapsto \chi(\sbt v) \ \text{ and }\
  \betar:\rR{M}\longrightarrow \rR{\wR},\ v\longmapsto \chi(v \sbt)
\end{equation}
as well as
\begin{equation}\label{e-alpha}
  \alphal: \lR{R}\longrightarrow \lR{\wM},\ r\longmapsto \chi(\sbt r)=r\chi\ \text{ and }\
  \alphar: \rR{R}\longrightarrow \rR{\wM},\ r\longmapsto \chi(r \sbt)=\chi r
\end{equation}
are isomorphisms (as common, $\sbt$ stands for the argument of the character in question).
In fact, these pairs of isomorphisms mutually induce each other in the sense of Remark~\ref{R-GenCharMod}.
Precisely,
\begin{align}
  &\alphar(r)=\zeta_R(r)\circ \betal, \qquad
  \alphal(r)=\zeta_R(r)\circ\betar & &\hspace*{-3em}\text{ for all }r\in R,\label{e-alphabeta}\\[.5ex]
  &\betar(v)=\zeta_M(v)\circ \alphal,\quad
  \betal(v)=\zeta_M(v)\circ\alphar & &\hspace*{-3em}\text{ for all }v\in M.\label{e-betaalpha}
\end{align}

We also need to consider~$M$ as a right module over its endomorphism ring $S:=\End(\lR{M})$, see Remark~\ref{R-EndM}.
Using the isomorphism $\Theta$ from Proposition~\ref{P-RCongEndRhat} one straightforwardly
verifies that the map
\begin{equation}\label{e-tau}
   \tau: R\longrightarrow S,\quad r\longmapsto \betal^{-1}\circ\Theta(r)\circ\betal
\end{equation}
is a ring isomorphism.
Let us consider the endomorphism $\tau(r)=\betal^{-1}\circ\Theta(r)\circ\betal\in S$ for a given $r\in R$.
The definition of the maps involved yields that for $v\in M$ the image $\tilde{v}:=\tau(r)(v)$ is (uniquely defined) such that
$\chi(\sbt\tilde{v})=\chi(r\sbt v)$ as characters in~$\wR$.
In addition to the isomorphism~$\tau$ we also have the natural ring homomorphism
\begin{equation}\label{e-sigma}
  \sigma: R\longrightarrow S,\quad r\longmapsto
  \left\{\begin{array}{rcc} \sigma(r):\lR{M}&\longrightarrow&\lR{M}\\  v\ \; &\longmapsto& vr\end{array}\right.
\end{equation}
Using $\rR{M}\cong\rR{\wR}$ and Proposition~\ref{P-rwR} we conclude that~$\sigma$ is injective and thus bijective because $|R|=|S|$
thanks to Proposition~\ref{P-RCongEndRhat}.
In other words, every left $R$-linear map on~$M$ is given by right multiplication by some~$r\in R$.
As a consequence, the endomorphism ring~$S=\End(\lR{M})$ may be written in the two ways
\begin{equation}\label{e-S2}
           S=\{\tau(r)\mid r\in R\}=\{\sigma(r)\mid r\in R\}.
\end{equation}
We will need these two descriptions of~$S$ when proving the double annihilator properties later in this section.

It is interesting to note the relation between the isomorphisms~$\tau$ and~$\sigma$.
Since~$\chi$ is a left and right generating character of~$M$ there exists for every $r\in R$ a unique element $\tilde{r}\in R$
such that $r\chi=\chi\tilde{r}$.
This gives rise to the map $g:R\longrightarrow R,\ r\longmapsto g(r)$, where $g(r)$ is such that $r\chi=\chi g(r)$.
One easily verifies that~$g$ is a ring automorphism and
$\sigma=\tau\circ g$.

Symmetrically to the above, every right $R$-module~$M$ is an $(S',R)$-bimodule where $S'=\End(\rR{M})$, and
Proposition~\ref{P-RCongEndRhat} yields that for a Frobenius bimodule~$\biMod{R}{M}{\!R}$ we have
\[
           \End(\lR{M}) \cong R \cong \End(\rR{M})\text{ as rings.}
\]
In particular,~$\biMod{R}{M}{\!R}$ is balanced in the sense of \cite[Sec.~1]{GNW04}.

As a consequence of~\eqref{e-S2}, any subset $K\subseteq M$ satisfies
\begin{equation}\label{e-RightSubM}
              K\text{ is a submodule of }M_R\Longleftrightarrow K\text{ is a submodule of }M_S.
\end{equation}

Now we are ready to formulate and prove the double annihilator properties of Frobenius bimodules.
For any bimodule $\biMod{R}{M}{\!S}$ and subsets~$K\subseteq M,\,I\subseteq R,\,J\subseteq S$ we define the following annihilators
\begin{equation}\label{e-Perps}
\left.\begin{split}
  &\lp{K}:=\{r\in R\mid rK=0\}, \  &\ \rp{K}&:=\{s\in S\mid Ks=0\},\\
  &\ \rp{I}:=\{v\in M\mid Iv=0\},\ &\ \lp{J}&:=\{v\in M\mid vJ=0\}.
\end{split}\qquad\qquad
\right\}
\end{equation}
Clearly, $\lp{K},\,\lp{J}$ are submodules of~$\lR{}R$ and~$\lR{M}$, respectively,
while $\rp{K},\,\rp{I}$ are submodules of $S_{S}$ and $M_S$, respectively.

\begin{prop}\label{P-DoubleAnn}
Let $\biMod{R}{M}{\!R}$ be a Frobenius bimodule and $S:=\End(\lR{M})$.
Then for the bimodule $\biMod{R}{M}{\!S}$ we have the double annihilator properties
\[
\begin{array}{cclccl}
  (\lp{K})^\perp\!&\!=\!&\!K\text{ for all }K\subseteq M_S, \ &\mbox{}^\perp(\rp{K})\!&\!=\!&\!K \text{ for all }K\subseteq\lR{M}\\[.8ex]
  \mbox{}^\perp(\rp{I})\!&\!=\!&\!I \text{ for all }I\subseteq\lR{R}, \ &(\lp{J})^\perp\!&\!=\!&\!J\text{ for all }J\subseteq S_S.
\end{array}
\]
\end{prop}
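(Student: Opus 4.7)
The plan is to reduce each of the four double-annihilator identities to the character-group duality $(B^\circ)^\circ=B$ of Remark~\ref{R-CharGroup}(7)(iii), by transporting every annihilator to the abelian group $\wR$ via the bimodule isomorphisms $\betal,\betar$ of~\eqref{e-beta}, and by using the descriptions~\eqref{e-S2} of $S=\End(\lR{M})$ to replace $S$-actions by $R$-actions throughout.

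The first ingredient is the pointwise translation: for $v\in M$ and $r\in R$ one has $rv=0$ if and only if $Rr\subseteq\ker\betal(v)$. Indeed $\betal(v)(ar)=\chi(a\cdot rv)$, so the condition amounts to $Rrv\subseteq\ker\chi$, and Theorem~\ref{T-FrobBimod}(iii) forces the left submodule $Rrv$ to vanish. Symmetrically, $vr=0$ iff $rR\subseteq\ker\betar(v)$, using Theorem~\ref{T-FrobBimod}(iv). Substituting these equivalences into the definitions~\eqref{e-Perps} rewrites each annihilator as a $\circ$-dual of a $\betal$- or $\betar$-image. For example, for a left ideal $I\subseteq R$ one obtains $\rp{I}=\betal^{-1}(I^\circ)$ where $I^\circ\subseteq\wR$ is the character-dual subgroup.

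The second, more delicate, ingredient is that the subgroups of $\wR$ arising in this translation are already closed under the appropriate $R$-action, so that dualizing twice recovers them. By the defining formula~\eqref{e-tau} of $\tau$ we have $\betal(\tau(r)(v))=\betal(v)\cdot r$, and since $S=\{\tau(r)\mid r\in R\}$ by~\eqref{e-S2}, this shows that $\betal$ maps right $S$-submodules of $M$ to right $R$-submodules of $\wR$; combined with the equivalence~\eqref{e-RightSubM}, an analogous statement holds for $\betar$ and left $R$-submodules. This is the main technical point, and is where the left/right bookkeeping must be done carefully: one must verify, at each step, that the intermediate subgroup of $\wR$ really is a one-sided $R$-submodule and that the corresponding $\circ$-dual in $R$ is the expected one-sided ideal.

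With these two ingredients in hand, each identity collapses to $B^{\circ\circ}=B$. For~(1), writing $L:=\betal(K)$, which is a right $R$-submodule of $\wR$ by the previous paragraph, we get $\lp{K}=L^\circ$ (a left ideal of $R$) and hence $(\lp{K})^\perp=\betal^{-1}(L^{\circ\circ})=\betal^{-1}(L)=K$. Identity~(3) is the cleanest case, needing only $\betal$ and the left-ideal structure of~$I$ (so that $RI=I$); identities~(2) and~(4) follow by the mirror argument through $\betar$, with~(4) requiring the preliminary step of pulling $J\subseteq S$ back to a right ideal $\sigma^{-1}(J)\subseteq R$ via the isomorphism~\eqref{e-sigma} and observing that $\lp{J}=\lp{\sigma^{-1}(J)}$.
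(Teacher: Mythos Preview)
Your proof is correct and follows the same strategy as the paper: reduce each annihilator identity to the character-group double-dual $(B^\circ)^\circ=B$ by transporting submodules through~$\betal$ (resp.~$\betar$), using the description $S=\{\tau(r)\mid r\in R\}$ to see that~$\betal$ carries right $S$-submodules of~$M$ to right $R$-submodules of~$\wR$. The paper's execution differs only in presentation: it first isolates the base case $M=\wR$ (where the annihilators literally coincide with the $\circ$-duals via~\eqref{e-Kdual}--\eqref{e-IJdual}) and then, for general~$M$, bootstraps from that case by a cardinality comparison, rather than computing $(\lp{K})^\perp=\betal^{-1}(L^{\circ\circ})$ directly as you do.
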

This proposition along with \cite[Thm.~2.1(e)(ii), Prop.~2.17]{GNW04} shows that our definition of Frobenius bimodule is equivalent to the one given
in~\cite[Sec.~2]{GNW04}.

\begin{proof}
Let us first consider first the case where $M=\wR$.
By Proposition~\ref{P-RCongEndRhat} we can naturally identify~$R$ and~$S$ and thus $\biMod{R}{M}{\!S}=\biMod{R}{M}{\!R}$.
One easily verifies, see also~\cite[p.~409]{Hon01}, that
\begin{align}
  \rp{K}&=K^\circ \text{ for all }K\subseteq\lR{\wR},
   \qquad\lp{K}=K^\circ \text{ for all }K\subseteq\rR{\wR}, \label{e-Kdual}\\
  \rp{I}&=I^\circ \text{ for all }I\subseteq\lR{R},
   \qquad\quad\lp{J}=J^\circ \text{ for all }J\subseteq\rR{R},\label{e-IJdual}
\end{align}
where $\mbox{\;}^\circ$ refers to the dual groups as in Remark~\ref{R-CharGroup}(7).
Now the double annihilator properties follow from Remark~\ref{R-CharGroup}(7)(iii).

Let now~$\biMod{R}{M}{\!R}$ be an arbitrary Frobenius bimodule with the left and right isomorphisms~$\betal$ and~$\betar$
as in~\eqref{e-beta}.
We also have the ring isomorphism~$\tau$ as in~\eqref{e-tau}.
Thus, $S=\{\tau(r)\mid r\in R\}$.
We will only prove the very first identity as this is the only one that will be needed later in the paper.
The others are proven in a similar manner.
\\
Let $K\subseteq M_S$. Set $L:=\betal(K)\subseteq\wR$.
We want to show that $L\subseteq \rR{\wR}$.
This is not a priori clear as~$\betal$ is not right linear in general.
Consider for any $r\in R$ the endomorphism $\tau(r)=\betal^{-1}\circ\Theta(r)\circ\betal$.
Then for $v\in K$ we have $\tau(r)(v)=\betal^{-1}(\betal(v)r))$, thus
$\betal(v)r=\betal\big(\tau(r)(v)\big)$.
The right $S$-module structure of~$K$ implies that with $v\in K$ we also have $v\cdot\!\tau(r)=\tau(r)(v)\in K$ and therefore
$\betal\big(\tau(r)(v)\big)\in L$.
Hence $\betal(v)r\in L$ for any $v\in K$ and $r\in R$, and this establishes that $L$ is a submodule of $\rR{\wR}$.
For the rest of the proof we use the notation $I^{\perp_M}$ and $I^{\perp_{\wR}}$ in order to distinguish between the
annihilators of~an ideal $I\subseteq\lR{R}$ in~$M$ and in~$\wR$.
Then left linearity and injectivity of~$\betal$ implies $\betal(I^{\perp_M})=I^{\perp_{\wR}}$  for any $I\subseteq \lR{R}$.
By the same properties we have $\lp{K}=\lp{L}$.
Now the special case $M=\wR$ gives us $(\lp{L})^{\perp_{\wR}}=L$ and thus we compute
\[
  \big|(\lp{K})^{\perp_M}\big|=\big|\betal\big((\lp{K})^{\perp_M}\big)\big|
  =\big|\betal\big((\lp{L})^{\perp_M}\big)\big|
  =\big|(\lp{L})^{\perp_{\wR}}\big|=|L|=|K|.
\]
Together with the obvious containment $K\subseteq(\lp{K})^{\perp}$ we obtain the desired identity.
\end{proof}

The equivalence in~\eqref{e-RightSubM} along with the fact that for $K\subseteq M$ the double annihilator $(\lp{K})^\perp$ does not involve
any right module structure of~$M$ yields the following consequence.
\begin{cor}\label{C-KMR}
For any Frobenius bimodule $\biMod{R}{M}{\!R}$ and any $K\subseteq\rR{M}$ we have $(\lp{K})^\perp=K$.
\end{cor}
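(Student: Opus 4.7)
The plan is to reduce the corollary directly to the first identity of Proposition~\ref{P-DoubleAnn}, which asserts $(\lp{K})^\perp = K$ for every right $S$-submodule $K \subseteq M_S$, where $S = \End(\lR{M})$. The key observation, already highlighted in the paragraph immediately preceding the corollary, is that both operators $K \longmapsto \lp{K}$ and $I \longmapsto \rp{I}$ involve only the \emph{left} $R$-action on~$M$; neither refers to any right action. Hence the double annihilator $(\lp{K})^\perp$ has literally the same meaning whether we regard~$M$ as $M_R$ or as~$M_S$.

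The first step is to interpret the hypothesis. Consistent with the convention used in Proposition~\ref{P-DoubleAnn}, where ``$K \subseteq M_S$'' signals that~$K$ is a right $S$-submodule, I read ``$K \subseteq \rR{M}$'' as saying that~$K$ is a right $R$-submodule of~$M$; indeed, the identity $(\lp{K})^\perp = K$ cannot hold in general unless~$K$ is closed under the relevant right action.

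The second step is to invoke the equivalence~\eqref{e-RightSubM}: a subset of~$M$ is a submodule of~$M_R$ if and only if it is a submodule of~$M_S$. Thus the given right $R$-submodule~$K$ is automatically a right $S$-submodule of~$M$.

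The third and final step is to apply the first identity of Proposition~\ref{P-DoubleAnn} to this~$K$, yielding $(\lp{K})^\perp = K$, where the annihilators are computed with respect to the bimodule $\biMod{R}{M}{\!S}$. By the observation above, these annihilators coincide with the ones computed in the original bimodule $\biMod{R}{M}{\!R}$, so the stated equality holds in~$\rR{M}$ as well. There is no genuine obstacle here; all the substantive work has already been done in establishing~\eqref{e-RightSubM} and Proposition~\ref{P-DoubleAnn}, and the corollary is essentially a translation of that result into the language of the native right $R$-module structure on~$M$.
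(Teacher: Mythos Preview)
Your proposal is correct and matches the paper's own argument essentially line for line: the paper derives the corollary from~\eqref{e-RightSubM} together with the observation that $(\lp{K})^\perp$ does not involve any right module structure of~$M$, and then applies the first identity of Proposition~\ref{P-DoubleAnn}. There is nothing to add.
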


\section{Dualizations of Partitions of Finite Frobenius Bimodules}\label{S-Partitions}
In this section we derive a character-theoretic dualization of partitions of Frobenius bimodules.
This generalizes the results in~\cite{GL15Fourier}  where partitions of~$R^n$ with~$R$ a Frobenius ring were considered.

Let us first set up some general notation.
Let $\cP = (P_k)_{k=1}^K$ be a partition of a finite set~$X$, i.e., $X$ is the disjoint union of the subsets $P_1,\ldots,P_K$.
The sets $P_k$ are called $\emph{blocks}$, and $|\cP| = K$ is the number of blocks of $\cP$ (assuming all blocks are nonempty).
A partition~$\cP$ is called \emph{finer} than the partition~$\cQ$ if every block of $\cP$ is contained in some block of $\cQ$.
In this case we write $\cP\leq\cQ$, and it follows that $|\cQ| \leq |\cP|$.
Two partitions~$\cP,\,\cQ$ of~$X$ are called \emph{identical} if $\cP \leq \cQ$ and
$\cQ\leq\cP$.
We will denote by $\sim_{\cP}$ the equivalence relation induced by the partition $\cP$.

We now turn to modules~$V^n$ where $\biMod{R}{V}{\!R}$ is a bimodule over the finite ring~$R$.
Then clearly $V^n:=\{(v_1,\ldots,v_n)\mid v_i\in V\}$ is an $(R,R)$-bimodule in the natural way.
We will consider $v\in V^n$ as a row vector and otherwise write $v\T$.
From Remark~\ref{R-CharGroup}(2) we obtain an $(R,R)$-bilinear isomorphism
\begin{equation}\label{e-Vhat}
    \widehat{V}^{\,n}\cong \widehat{V^n}\ \text{ via }\ (\chi_1,\ldots,\chi_n)(v_1,\ldots,v_n):=\prod_{i=1}^n\chi_i(v_i),
\end{equation}
and we will simply identify these modules. In particular, $\wR^{\,n}=\widehat{R^n}$.
For matrices $A\in R^{n\times m}$ and $B\in R^{m\times n}$, where $m\in\N$, and for $v\in V^n$ we have
$vA,\,(Bv\T)\T\in V^m$ in the obvious way.
This gives rise to the following group actions. Its orbits will play a crucial role later on.

\begin{defi}\label{D-UAction}
Let~$\cU$ be a subgroup of $\GL_n(R)$.
Then~$\cU$ induces a right and left group action on~$V^n$ via
\[
  V^n\times\cU\longrightarrow V^n,\
  (v, U) \longmapsto vU\ \text{ and }\
  \cU\times V^n\longrightarrow V^n,\
  (U, v) \longmapsto (Uv\T)\T.
\]
Denote by $\cP_{V^n,\,\cU}$ and $\cP_{V^n,\,\cU^\top}$ the respective orbit partitions on~$V^n$.
\end{defi}

From now on, let $R$ be a finite ring and $\biMod{R}{M}{\!R}$ be a finite Frobenius bimodule with generating character~$\chi$.
Note that $|M^n|=|\wM^n|=|R^n|=|\wR^n|$.
Let $\ideal{\,\sbt,\,\sbt}$ denote both the standard dot products
$M^n\times R^n \longrightarrow M,\ \ideal{v,r}:=v r\T=\sum_{i=1}^n v_i r_i$ and
$R^n\times M^n \longrightarrow M,\ \ideal{r,v}:=rv\T=\sum_{i=1}^n r_i v_i$.
Thanks to Proposition~\ref{P-rwR} these bilinear forms are non-degenerate.
One easily verifies that the maps $\betal,\,\betar,\,\alphal,\,\alphar$ from~\eqref{e-beta}, \eqref{e-alpha}  generalize to the isomorphisms
\begin{align}
  &\beta_l : \lR{(M^n)} \longrightarrow \lR{(\wR^n)},\ v \longmapsto \chi(\ideal{\, \sbt, v})\ \text{ and }&
   \beta_r : \rR{(M^n)} \longrightarrow \rR{(\wR^n)},\ v \longmapsto \chi(\ideal{v,\, \sbt}) \label{e-betaleftright}\\[.5ex]
   &\alpha_l : \lR{(R^n)} \longrightarrow \lR{(\wM^n)},\ r \longmapsto \chi(\ideal{\,\sbt, r}) \text{ and }&
   \alpha_r : \rR{(R^n)}  \longrightarrow \rR{(\wM^n)},\ r \longmapsto \chi(\ideal{r,\, \sbt}) \label{e-alphaleftright}
\end{align}
Note that
\[
  \alphal(r)(v)=\chi(\ideal{v,r})=\chi({\TS\sum_{i=1}^n v_i r_i})=\prod_{i=1}^n (r_i\chi)(v_i)
\]
and thus with the identification~\eqref{e-Vhat} we obtain the isomorphism
\begin{equation}\label{e-alphal1}
  \alpha_l:\lR{(R^n)}\longrightarrow \lR{(\wM^n)},\ r\longmapsto (r_1\chi,\ldots,r_n\chi)
\end{equation}
and similarly
\begin{equation}\label{e-alphar1}
  \alpha_r:\lR{(R^n)}\longrightarrow \lR{(\wM^n)},\ r\longmapsto (\chi r_1,\ldots,\chi r_n)
\end{equation}
In the same way we have $\betal(v)=(\chi(\sbt v_1),\ldots,\chi(\sbt v_n))$ for all $v\in R^n$ and similarly for~$\betar$.

The isomorphisms in~\eqref{e-betaleftright} and~\eqref{e-alphaleftright} satisfy the simple relations
\begin{equation}\label{e-alphabeta2}
   \alphal(r)(v)=\betar(v)(r)\ \text{ and }\ \alphar(r)(v)=\betal(v)(r)\ \text{ for all }r\in R^n,\,v\in M^n;
\end{equation}
see also~\eqref{e-alphabeta}.
These isomorphisms will be crucial for developing a duality theory for partitions.

\begin{defi}\label{D-DualPartGroup}
Let~$A$ be a finite abelian group and $\cP = (P_k)_{k=1}^K$ be a partition of $A$.
The \emph{dual partition} of $\cP$, denoted by~$\widehat{\cP}$, is the partition of $\wA$ defined via the equivalence relation
\[
  \psi \sim_{\widehat{\cP}} \psi' \Longleftrightarrow
         \sum_{a\in P_k} \psi (a) = \sum_{a\in P_k} \psi'(a) \ \text{ for all }\  k= 1, \ldots , K.
\]
We call~$\cP$ \emph{reflexive} if $\cP=\wwcP$ (where we identify $A$ and~$\bidual{A}$).
\end{defi}

\begin{rem}\label{R-DualPartGroup}
It is well-known (see \cite[Thm.~2.4]{GL15Fourier} and \cite[Fact~V.2]{Hon10}) that $|\cP|\leq|\wcP|$ and $\wwcP\leq\cP$.
Furthermore, $|\cP|=|\wcP|\Longleftrightarrow\cP=\wwcP$.
\end{rem}

We turn now to partitions of~$M^n$ and $R^n$.
Using the bilinear forms above we can define specific left and right dual partitions in $R^n$ and $M^n$, respectively.
\begin{defi}\label{D-DualPartMod}
For a partition $\cP = (P_k)_{k=1}^K$ of~$R^n$ the $\chi$-\emph{left dual} and $\chi$-\emph{right dual partitions} are
 the partitions of~$M^n$ defined via
\[
   v\widesim_{\wcPchil} v'
   \Longleftrightarrow
    \sum_{r\in P_k} \chi(\ideal{r,v}) = \sum_{r\in P_k} \chi(\ideal{r,v'})\text{ for all }k = 1, \ldots, K,
\]
and
\[
   v\widesim_{\wcPchir} v' \Longleftrightarrow
    \sum_{r\in P_k} \chi(\ideal{v,r}) = \sum_{r\in P_k} \chi(\ideal{v',r})\text{ for all }k = 1, \ldots, K.
\]
Similarly, for a partition $\cQ = (Q_k)_{k=1}^L$ of~$M^n$
the $\chi$-\emph{left dual} and $\chi$-\emph{right dual partitions} are the partitions of~$R^n$
defined by the equivalence relations
\[
   r\widesim_{\wcQchil} r'
   \Longleftrightarrow
    \sum_{v\in Q_k} \chi(\ideal{v, r}) = \sum_{v\in Q_k} \chi(\ideal{v, r'})\text{ for all }k = 1, \ldots, L,
\]
and
\[
   r \widesim_{\wcQchir} r' \Longleftrightarrow
    \sum_{v\in Q_k} \chi(\ideal{r,v}) = \sum_{v\in Q_k} \chi(\ideal{r',v})\text{ for all }k = 1, \ldots, L.
\]
\end{defi}
The so defined dual partitions do indeed depend on the choice of the generating character~$\chi$.
But this can easily be described. Suppose $\chi'$ is another generating character of~$M$.
Then $\chi'=u\chi=\chi\tilde{u}$ for some units $u,\,\tilde{u}\in R^*$, see Remark~\ref{R-GenCharMod}.
As a consequence, $\chi'(\ideal{r,v})=\chi(\ideal{r,vu})$ and therefore for any partition~$\cP$ of $R^n$ we have
$v\widesim_{\wcP^{[\chi',l]}} v'\Longleftrightarrow vu\widesim _{\wcP^{[\chi,l]}} v'u$ and thus
$\wcP^{[\chi',l]}u=\wcP^{[\chi,l]}$, where the latter means that each block is right multiplied by~$u$.
In the same way $\tilde{u}\wcP^{[\chi',r]}=\wcP^{[\chi,r]}$.
Analogous relations hold true for the duals of partitions of~$M^n$.

\begin{rem}\label{R-DualPart}\
\begin{alphalist}
\item For a partition~$\cP$ of $R^n$ we have $\wcPchil:=\betal^{-1}(\widehat{\cP})$  and $\wcPchir:=\betar^{-1}(\widehat{\cP})$.
\item For a partition~$\cQ$ of $M^n$ we have $\wcQchil:=\alphal^{-1}(\widehat{\cQ})$  and $\wcQchir:=\alphar^{-1}(\widehat{\cQ})$.
\end{alphalist}
As a consequence, $|\wcPchil|=|\wcPchir|=|\wcP|\geq|\cP|$ for each partition~$\cP$ of~$R^n$ or~$M^n$.
\end{rem}

The Hamming weight gives rise to a particularly nice partition.
\begin{exa}\label{E-HammPart}
On~$R^n$ and~$M^n$ consider the Hamming weight $\wtH(v_1,\ldots,v_n)=|\{i\mid v_i\neq0\}|$ (for a character-module this reads as
$\wtH(\psi_1,\ldots,\psi_n)=|\{i\mid \psi_i\neq\varepsilon\}|$).
This gives rise to the Hamming partition~$\cP_{M^n,\text{Ham}}=(P_i)_{i=0}^n$, where
$P_i=\{v\in M^n\mid \wtH(v)=i\}$.
Then $\widehat{\cP_{M^n,\text{Ham}}}^{[\chi,l]}=\widehat{\cP_{M^n,\text{Ham}}}^{[\chi,r]}=\cP_{R^n,\text{Ham}}$, the Hamming partition on~$R^n$.
This follows from the well-known group-theoretic case, see e.g.,~\cite[Ex.~2.3(c)]{GL15Fourier}, along with the fact that~$\alphal$
and~$\alphar$ are Hamming-weight-preserving isomorphisms as can be seen from~\eqref{e-alphal1} and~\eqref{e-alphar1}.
Now, $\widehat{\cP_{R^n,\text{Ham}}}^{[\chi,l]}=\widehat{\cP_{R^n,\text{Ham}}}^{[\chi,r]}=\cP_{M^n,\text{Ham}}$ follows from reflexivity
of the Hamming partition, which in turn is a trivial consequence of Remark~\ref{R-DualPartGroup}.
\end{exa}

Remark~\ref{R-DualPart} allows us to prove the following analogue of \cite[Prop.~4.4]{BGL15}.

\begin{thm}\label{T-chiBiDual}
Let $\cP$ be any partition of $R^n$ or $M^n$.
Then
\[
   \widehat{\wcPchil}^{[\chi,r]}=\wwcP=\widehat{\wcPchir}^{[\chi,l]},
\]
where~$\wwcP$ is the bidual in the sense of Definition~\ref{D-DualPartGroup}.
As a consequence,
\[
     \cP\text{ is reflexive}\Longleftrightarrow\cP= \widehat{\wcPchil}^{[\chi,r]}\Longleftrightarrow\cP= \widehat{\wcPchir}^{[\chi,l]}.
\]
\end{thm}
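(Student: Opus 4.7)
The plan is to unroll both sides of the claimed equality via the definitions and reduce to the group-theoretic bidual from Definition~\ref{D-DualPartGroup} with the aid of Remark~\ref{R-DualPart}. I will treat in detail the case where $\cP$ is a partition of $R^n$; the case of a partition of $M^n$ is entirely symmetric, with $\alphal,\alphar$ playing the roles of $\betal,\betar$ throughout.

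By Remark~\ref{R-DualPart}(a), $\wcPchil=\betal^{-1}(\widehat{\cP})$, so the blocks of $\wcPchil$ are exactly the sets $\betal^{-1}(P^*)$ with $P^*$ ranging over the blocks of $\widehat{\cP}$. Unrolling the definition of the $\chi$-right dual (Definition~\ref{D-DualPartMod}) applied to the partition $\wcPchil$ of $M^n$, two vectors $r,r'\in R^n$ are equivalent modulo $\widehat{\wcPchil}^{[\chi,r]}$ iff
\[
   \sum_{v\in \betal^{-1}(P^*)}\chi(\ideal{r,v})
   =\sum_{v\in \betal^{-1}(P^*)}\chi(\ideal{r',v})
   \quad\text{for every block }P^*\text{ of }\widehat{\cP}.
\]
The key step is then the tautological identity $\chi(\ideal{r,v})=\betal(v)(r)$, coming straight from~\eqref{e-betaleftright}. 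Substituting $\phi=\betal(v)$ and using that $\betal$ is a bijection, the condition above reformulates as $\sum_{\phi\in P^*}\phi(r)=\sum_{\phi\in P^*}\phi(r')$ for every block $P^*$ of $\widehat{\cP}$. Under the canonical identification $R^n\cong\bidual{R^n}$ via $\zeta_{R^n}$ from Remark~\ref{R-CharGroup}(3), this is precisely the defining condition for $r\sim_{\wwcP} r'$. Hence $\widehat{\wcPchil}^{[\chi,r]}=\wwcP$.

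For the twin identity $\widehat{\wcPchir}^{[\chi,l]}=\wwcP$ the argument is entirely parallel: Remark~\ref{R-DualPart}(a) supplies $\wcPchir=\betar^{-1}(\widehat{\cP})$, and the companion identity $\chi(\ideal{v,r})=\betar(v)(r)$ from~\eqref{e-betaleftright} carries out the substitution. For $\cP$ a partition of $M^n$, one invokes Remark~\ref{R-DualPart}(b) in place of~(a) together with $\alphal(r)(v)=\chi(\ideal{v,r})$ and $\alphar(r)(v)=\chi(\ideal{r,v})$ from~\eqref{e-alphaleftright}. The final equivalence concerning reflexivity is then immediate: $\cP$ is reflexive iff $\cP=\wwcP$, which by what was just proved is iff $\cP=\widehat{\wcPchil}^{[\chi,r]}$ iff $\cP=\widehat{\wcPchir}^{[\chi,l]}$. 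The only genuine obstacle is bookkeeping—keeping straight which of the four isomorphisms $\alphal,\alphar,\betal,\betar$ is being used and in which slot of the bilinear form—but once the pullback formulas in Remark~\ref{R-DualPart} are in hand no further ingredients are required.
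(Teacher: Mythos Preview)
Your proposal is correct and follows essentially the same approach as the paper. The only cosmetic difference is that you treat the case $\cP\subseteq R^n$ in detail whereas the paper writes out the case $\cP\subseteq M^n$; in both versions the argument boils down to applying the pullback formulas of Remark~\ref{R-DualPart} and the substitution identity $\chi(\ideal{r,v})=\betal(v)(r)$ (equivalently~\eqref{e-alphabeta2}) to convert the $\chi$-duals back into the group-theoretic bidual of Definition~\ref{D-DualPartGroup}.
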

\begin{proof}
Let $\cP$ be a partition of $M^n$. Set $\cQ=(Q_k)_{k=1}^N=\wcPchil$ and $\cR=\wcQchir=\betar^{-1}(\wcQ)$.
Let $v,\,v'\in M^n$. With~\eqref{e-alphabeta2}  we compute
\begin{align*}
     v\widesim_{\cR}v' &\Longleftrightarrow \betar(v)\sim_{\wcQ}\betar(v')\\
    &\Longleftrightarrow \sum_{r\in Q_k}\betar(v)(r)=\sum_{r\in Q_k}\betar(v')(r)\ \text{ for all }k=1,\ldots,N\\
    &\Longleftrightarrow \sum_{r\in Q_k}\alphal(r)(v)=\sum_{r\in Q_k}\alphal(r)(v')\ \text{ for all }k=1,\ldots,N\\
    &\Longleftrightarrow \sum_{\Psi\in\alpha_{\rm l}(Q_k)}\!\!\Psi(v)= \sum_{\Psi\in\alpha_{\rm l}(Q_k)}\!\!\Psi(v')\ \text{ for all }k=1,\ldots,N\\
    &\Longleftrightarrow v\widesim_{\widehat{\alpha_{\rm l}(\cQ)}} v' \\
    &\Longleftrightarrow v\widesim_{\mbox{$\scriptsize\widehat{\phantom{\Big|}\hspace*{.5em}}\hspace*{-1em}\wcP$}}\, v'.
\end{align*}
This establishes $\widehat{\wcPchil}^{[\chi,r]}=\wwcP$. The other one as well as those for partitions of~$R^n$ are shown in the same way.
The rest follows.
\end{proof}

For the orbit partitions of group actions (see Definition~\ref{D-UAction}) the above leads to the following relations.

\begin{lem}\label{L-UAction}
Let $\cU$ be a subgroup of $\GL_n(R)$. Then
\begin{alphalist}
\item $\cP_{R^n,\,\cU}\leq\widehat{\cP_{M^n,\,\cU^{\sf T}}}^{[\chi,r]}$ and $\cP_{R^n,\,\cU^{\sf T}}\leq\widehat{\cP_{M^n,\,\cU}}^{[\chi,l]}$.
\item $\cP_{M^n,\,\cU}\leq\widehat{\cP_{R^n,\,\cU^{\sf T}}}^{[\chi,r]}$ and $\cP_{M^n,\,\cU^{\sf T}}\leq\widehat{\cP_{R^n,\,\cU}}^{[\chi,l]}$.
\end{alphalist}
\end{lem}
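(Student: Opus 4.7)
The plan is to prove all four inclusions directly from the definitions of orbit partitions (Definition~\ref{D-UAction}) and $\chi$-dual partitions (Definition~\ref{D-DualPartMod}), exploiting associativity identities such as
\[
   \ideal{rU,v}=r(Uv\T)=\ideal{r,(Uv\T)\T}
\]
together with the fact that each orbit is stable under the matrix action producing it.

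For the first inclusion of~(a), I fix $r,r'\in R^n$ with $r'=rU$ for some $U\in\cU$, and aim to show that $r$ and $r'$ belong to the same block of $\widehat{\cP_{M^n,\cU^{\sf T}}}^{[\chi,r]}$. Given a block $Q_k$ of $\cP_{M^n,\cU^{\sf T}}$, which by Definition~\ref{D-UAction} is an orbit of the left action $v\longmapsto (Uv\T)\T$, I compute
\[
   \sum_{v\in Q_k}\chi(\ideal{r',v})=\sum_{v\in Q_k}\chi(\ideal{r,(Uv\T)\T})
\]
and then re-index via the bijection $v\longmapsto(Uv\T)\T$ of $Q_k$ onto itself to obtain $\sum_{v\in Q_k}\chi(\ideal{r,v})$. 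By Definition~\ref{D-DualPartMod} this is precisely the condition needed.

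The second inclusion of~(a) follows the same template: if $r'=(Ur\T)\T$ and $Q_k$ is a right $\cU$-orbit in $M^n$, then $\ideal{v,r'}=vUr\T=\ideal{vU,r}$, and re-indexing by the bijection $v\longmapsto vU$ of $Q_k$ yields the required equality of character sums. Part~(b) is obtained by the same argument with the roles of $R^n$ and $M^n$ swapped and the left/right duals interchanged; in each of the four cases the key step is one associativity identity that shifts the matrix $U$ from one argument of $\ideal{\sbt,\sbt}$ to the other, followed by a single re-indexing over the orbit.

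The proof presents no serious obstacle. The only care required is bookkeeping: correctly pairing the decoration $\cU$ versus $\cU^{\sf T}$ on the orbit partition with the decoration $[\chi,l]$ versus $[\chi,r]$ on the dual partition, and selecting the associativity identity that moves $U$ across $\ideal{\sbt,\sbt}$ in the direction needed so that it can be absorbed into the orbit-preserving bijection.
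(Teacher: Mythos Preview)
Your proposal is correct and matches the paper's proof essentially verbatim: the paper also fixes $r'=rU$, uses the identity $\ideal{r',v}=\ideal{rU,v}=\ideal{r,(Uv\T)\T}$, and then invokes the closedness of the block~$Q$ under the left $\cU$-action to re-index the sum, declaring the remaining three inclusions to follow ``in the same way.'' There is no substantive difference between your argument and the paper's.
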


\begin{proof}
For brevity set $\cP:=\cP_{R^n,\,\cU},\,\cQ:=\cP_{M^n,\, \cU^{\sf T}}$.
Let $r,\,r'\in R^n$ such that $r\widesim_{\cP} r'$, thus $r'=rU$ for some $U\in\cU$.
Then for any $v\in M^n$ we have $\ideal{r',v}=\ideal{rU,v}=\ideal{r,(Uv\T)\T}$.
Let $Q$ be any block of $\cQ$. Then the closedness of~$Q$ under the left action of~$\cU$ yields
\[
   \sum_{v\in Q}\chi(\ideal{r',v})=\sum_{v\in Q}\chi(\ideal{r,(Uv\T)\T})=\sum_{v\in Q}\chi(\ideal{r,v}).
\]
This shows $r\widesim_{\widehat{\cQ}^{[\chi,r]}}r'$, as desired. The other relations are shown in the same way.
\end{proof}

Now we obtain
\begin{thm}\label{T-UPart}
Let~$\cU$ be a subgroup of $\GL_n(R)$. Then
\[
   \cP_{R^n,\,U}=\widehat{\cP_{M^n,\,\cU^{\sf T}}}^{[\chi,r]},\
   \cP_{R^n,\,\cU^{\sf T}}=\widehat{\cP_{M^n,\,\cU}}^{[\chi,l]},\
   \cP_{M^n,\,\cU}=\widehat{\cP_{R^n,\,\cU^{\sf T}}}^{[\chi,r]},\
   \cP_{M^n,\,\cU^{\sf T}}=\widehat{\cP_{R^n,\,\cU}}^{[\chi,l]}.
\]
As a consequence, all these partitions are reflexive.
Moreover, the right (resp.\ left) group action of~$\cU$ on~$R^n$ and
the left (resp.\ right) action of~$\cU$ on~$M^n$ lead to the same number of orbits.
\end{thm}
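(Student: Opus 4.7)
The plan is to upgrade each refinement from Lemma~\ref{L-UAction} to an equality by a block-counting argument that combines Lemma~\ref{L-UAction} with the inequality $|\cP|\leq|\wcP|$ from Remark~\ref{R-DualPartGroup} and the equality $|\wcPchil|=|\wcPchir|=|\wcP|$ from Remark~\ref{R-DualPart}. Since refinement $\cA\leq\cB$ implies $|\cB|\leq|\cA|$, both refinements together will force matching cardinalities and hence equality of partitions.

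Concretely, I would fix one pair of refinements from Lemma~\ref{L-UAction}, say
\[
\cP_{R^n,\cU}\leq\widehat{\cP_{M^n,\,\cU^{\sf T}}}^{[\chi,r]}\quad\text{and}\quad
\cP_{M^n,\,\cU^{\sf T}}\leq\widehat{\cP_{R^n,\cU}}^{[\chi,l]}.
\]
The first refinement gives $|\widehat{\cP_{M^n,\,\cU^{\sf T}}}^{[\chi,r]}|\leq|\cP_{R^n,\cU}|$, which by Remark~\ref{R-DualPart} and Remark~\ref{R-DualPartGroup} is at least $|\cP_{M^n,\,\cU^{\sf T}}|$, producing $|\cP_{M^n,\,\cU^{\sf T}}|\leq|\cP_{R^n,\cU}|$. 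Running the same chain through the second refinement yields the reverse inequality $|\cP_{R^n,\cU}|\leq|\cP_{M^n,\,\cU^{\sf T}}|$. The resulting equality $|\cP_{R^n,\cU}|=|\cP_{M^n,\,\cU^{\sf T}}|$ collapses every intermediate inequality: in particular $|\widehat{\cP_{M^n,\,\cU^{\sf T}}}^{[\chi,r]}|=|\cP_{R^n,\cU}|$, so the refinement becomes an identity. The identity $\cP_{M^n,\,\cU^{\sf T}}=\widehat{\cP_{R^n,\cU}}^{[\chi,l]}$ is obtained similarly, and repeating the entire argument with the other pair of refinements from Lemma~\ref{L-UAction} delivers the remaining two identities as well as the orbit-count equality $|\cP_{R^n,\cU^{\sf T}}|=|\cP_{M^n,\cU}|$.

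For reflexivity I see two routes, both essentially free at this point. The same chains also force $|\cP_{M^n,\,\cU^{\sf T}}|=|\widehat{\cP_{M^n,\,\cU^{\sf T}}}|$ (and similarly for the other three partitions), so Remark~\ref{R-DualPartGroup} gives the bidual identity $\cP=\wwcP$ directly; alternatively, applying the $\chi$-left-dualization to the identity $\widehat{\cP_{M^n,\,\cU^{\sf T}}}^{[\chi,r]}=\cP_{R^n,\cU}$ and combining with $\widehat{\cP_{R^n,\cU}}^{[\chi,l]}=\cP_{M^n,\,\cU^{\sf T}}$ yields $\widehat{\widehat{\cP_{M^n,\,\cU^{\sf T}}}^{[\chi,r]}}^{[\chi,l]}=\cP_{M^n,\,\cU^{\sf T}}$, which by Theorem~\ref{T-chiBiDual} is exactly reflexivity. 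No step poses a real obstacle; the only care needed is in the bookkeeping of the cardinality chain, to ensure that the two occurrences of Remark~\ref{R-DualPart} and Remark~\ref{R-DualPartGroup} line up so that a single equality of orbit counts forces every inequality in both chains to be an equality simultaneously.
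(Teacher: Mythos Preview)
Your proposal is correct and is essentially the paper's own argument: the paper writes the same cardinality comparison as a single circular chain
\[
  |\cP_{R^n,\,\cU}|\geq\big|\widehat{\cP_{M^n,\,\cU^{\sf T}}}^{[\chi,r]}\big|\geq |\cP_{M^n,\,\cU^{\sf T}}|\geq|\cP_{R^n,\,\cU}|,
\]
using Lemma~\ref{L-UAction} for the outer inequalities and Remark~\ref{R-DualPart} for the middle one, and then invokes Lemma~\ref{L-UAction} once more to upgrade refinement to equality. Your two half-chains and the paper's single loop are the same counting argument, merely organized differently.
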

\begin{proof}
Combining Remark~\ref{R-DualPart} and Lemma~\ref{L-UAction} we obtain
\[
  |\cP_{R^n,\,U}|\geq|\widehat{\cP_{M^n,\,\cU^{\sf T}}}^{[\chi,r]}|\geq |\cP_{M^n,\,\cU^{\sf T}}|\geq|\cP_{R^n,\,U}|.
\]
Thus we have equality everywhere, and again with Lemma~\ref{L-UAction} we arrive at the first identity.
The others are shown in the same way.
\end{proof}

\begin{exa}\label{E-NonFrob}
Let $R=\F_2[x,y]/(x^2,y^2,xy)$, which is a commutative non-Frobenius ring, and let $M$ be the Frobenius bimodule $M=\wR$.
Furthermore, let
\[
  \cU=\bigg\{\begin{pmatrix}1&r\\0&u\end{pmatrix} \bigg|\, r\in R,\,u\in R^*\bigg\}.
\]
Then one can show that both, $\cP_{R^2,\,\cU}$ and $\cP_{M^2,\,\cU^{\sf T}}$, consist of~$17$ orbits whereas
$\cP_{R^2,\,\cU^{\sf T}}$ and $\cP_{M^2,\,\cU}$ consist of~$20$ orbits.
This shows that for non-Frobenius rings the right and left action of~$\cU$ do not lead in general to the same
number of orbits (see also \cite[Sec.~4]{BGL15}).
Using reflexivity of all partitions involved, the just mentioned cardinalities are compliant with the previous theorem.
\end{exa}

We make use of the last result in the following technical lemma which will be crucial in the next section.

\begin{lem}\label{L-AiInner}
Let $\cC \subseteq \lR{M^n}$ and let $\cU$ be a subgroup of $\GL_n(R)$.
Assume $f:\cC\longrightarrow M^n$ is a left linear map such that for all $x\in\cC$ there exists a matrix $U_x\in \cU$ such that $f(x) = xU_x$.
Then, for all $r\in R^n$ there exists a matrix $A_r \in \cU$ such that $\ideal{ f(x),\, r} = xA_r r^{\sf T}$ for all $x\in\cC$.
\end{lem}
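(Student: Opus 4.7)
For each fixed $r\in R^n$, I will view $\ell_r:x\mapsto\ideal{f(x),r}$ as a left $R$-linear map $\cC\to M$ and match it against the candidate $\ell_{A,r}:x\mapsto xAr^{\sf T}$ for some $A\in\cU$. Both are left $R$-linear maps from~$\cC$ to~$M$, so by the last assertion of Theorem~\ref{T-FrobBimod}, applied with $V=\cC$, it suffices to produce $A_r\in\cU$ with
\[
  \chi(\ideal{f(x),r})=\chi(xA_r r^{\sf T})\quad\text{for all }x\in\cC.
\]

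To locate $A_r$ I will use the partition duality developed in this section. Because $f(x)=xU_x$, the elements~$x$ and~$f(x)$ always lie in the same right $\cU$-orbit of~$M^n$, so $x\sim_{\cP_{M^n,\cU}}f(x)$ for every $x\in\cC$. Theorem~\ref{T-UPart} identifies $\cP_{M^n,\cU}=\widehat{\cP_{R^n,\cU^{\sf T}}}^{[\chi,r]}$, and Definition~\ref{D-DualPartMod} translates the equivalence $x\sim_{\cP_{M^n,\cU}} f(x)$ into
\[
  \sum_{s\in Q}\chi(\ideal{x,s})=\sum_{s\in Q}\chi(\ideal{f(x),s})\quad\text{for all }x\in\cC
\]
for every block~$Q$ of $\cP_{R^n,\cU^{\sf T}}$. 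I will specialize to the block $Q:=r\cU^{\sf T}$, the right $\cU^{\sf T}$-orbit of~$r$ in~$R^n$. Using $\chi(\ideal{v,s})=\alpha_l(s)(v)$ from~\eqref{e-alphaleftright}, and noting that $\alpha_l(s)\circ f\in\widehat{\cC}$ since~$f$ is additive, this identity becomes
\[
  \sum_{s\in Q}\alpha_l(s)\big|_{\cC}\;=\;\sum_{s\in Q}\alpha_l(s)\circ f\quad\text{in }\widehat{\cC},
\]
an equality of two $|Q|$-fold sums of characters of the abelian group~$\cC$.

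At this stage Remark~\ref{R-CharGroup}(6), applied in~$\widehat{\cC}$, forces the two underlying multisets of characters to coincide. In particular, the character $\alpha_l(r)\circ f$, which appears on the right-hand side at $s=r$, must equal $\alpha_l(s_0)\big|_{\cC}$ for some $s_0\in Q=r\cU^{\sf T}$. Writing $s_0=rA_r^{\sf T}$ with $A_r\in\cU$ and unwinding the definition of~$\alpha_l$ yields $\chi(\ideal{f(x),r})=\chi(xA_r r^{\sf T})$ for every $x\in\cC$; the injectivity from Theorem~\ref{T-FrobBimod} then upgrades this character-level equality to the desired identity $\ideal{f(x),r}=xA_r r^{\sf T}$ in~$M$, completing the proof.

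The main obstacle is to guarantee that the character matching supplied by Remark~\ref{R-CharGroup}(6) produces a representative~$s_0$ actually lying in the orbit $r\cU^{\sf T}$, rather than merely somewhere in~$R^n$; this is precisely why the block~$Q$ is chosen to be the $\cU^{\sf T}$-orbit of~$r$ itself, so that the match is forced to occur inside that orbit and thus delivers the matrix $A_r\in\cU$ for free.
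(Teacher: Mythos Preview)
Your proof is correct and follows essentially the same route as the paper's: both arguments fix a block of $\cP_{R^n,\cU^{\sf T}}$ containing~$r$, obtain the identity $\sum_{s\in Q}\chi(\ideal{f(\sbt),s})=\sum_{s\in Q}\chi(\ideal{\sbt,s})$ in~$\widehat{\cC}$, invoke Remark~\ref{R-CharGroup}(6) to match $\chi(\ideal{f(\sbt),r})$ with some $\chi(\ideal{\sbt,s_0})$ for $s_0=rA_r^{\sf T}$ in the orbit, and then apply the injectivity in Theorem~\ref{T-FrobBimod} to strip off~$\chi$. The only cosmetic difference is that the paper derives the character--sum identity by a direct two-line computation using $f(x)=xU_x$ and the invariance of the block under the left $\cU$-action, whereas you obtain it by quoting Theorem~\ref{T-UPart} (i.e., $\cP_{M^n,\cU}=\widehat{\cP_{R^n,\cU^{\sf T}}}^{[\chi,r]}$) together with Definition~\ref{D-DualPartMod}; the content is identical.
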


\begin{proof}
Let $P$ be a block of $\cP_{R^n,\,\cU^{\sf T}}=\widehat{\cP_{M^n,\,\cU}}^{[\chi,l]}$.
Then for all $x \in \cC$
\[
    \sum_{r\in P} \chi(\ideal{ f(x),\, r}) = \sum_{r\in P} \chi(\ideal{xU_x,\, r})
    = \sum_{r\in P} \chi(\ideal{ x, (U_x r^{\sf T})^{\sf T}}) = \sum_{r\in P} \chi(\ideal{ x,\,r}),
\]
where the last step follows from the invariance of~$P$ under the left action of~$\cU$.
As a consequence,
\[
     \sum_{r\in P} \chi(\ideal{ f(\sbt), \,r}) = \sum_{r\in P} \chi(\ideal{\sbt,\, r})
\]
and  each side of the identity is a sum of  elements in the character group $\widehat{\cC}$.
Fix $r\in R^n$ and assume that $r$ is contained in the block $P$ of $\cP_{R^n,\,\cU^{\sf T}}$.
Remark~\ref{R-CharGroup}(6) implies that the character $\chi(\ideal{f(\sbt),\, r })$ must appear on the right hand side of the above identity.
In other words, there exists $r'\in P$ i.e., $r' = (A_rr^{\sf T})^{\sf T}$ for some $A_r \in \cU$, such that $\chi(\ideal{ f(\sbt),\, r}) = \chi(\ideal{\sbt,\, r'})$.
With the aid of Theorem~\ref{T-FrobBimod} we conclude that $\ideal{ f(\sbt),\, r}= \ideal{\sbt,\, r'}$ as maps in $\text{Hom}(\lR{\cC},\lR{M})$.
This implies $\ideal{f(x),\, r} = \ideal{ x,\, (A_rr^{\sf T})^{\sf T}} = xA_rr^{\sf T}$, as desired.
\end{proof}

\section{MacWilliams Extension Theorem for Frobenius Bimodules}\label{S-Extensions}

In this section we establish MacWilliams Extension Theorem for codes over Frobenius bimodules with respect to various weight functions.

Throughout, let~$R$ be any finite ring and $\lR{M}$ be a finite module.
A map $\wt:M\longrightarrow\C$ satisfying $\wt(0)=0$ is called a \emph{weight function} on~$M$.
Note that a weight function on~$M$ can naturally be extended to a weight function on~$M^n$ via
\begin{equation}\label{e-AddWeights}
  \wt(v_1,\ldots,v_n):=\sum_{i=1}^n \wt(v_i)\ \text{ for all }\ (v_1,\ldots,v_n)\in M^n.
\end{equation}

\begin{exa}\label{E-HamWt}
The \emph{Hamming weight with respect to the alphabet~$M$} is defined as the weight function on~$M^n$, where $n\in\N$, given by
$\wtH(v_1,\ldots,v_n):=|\{i\mid v_i\neq0\}|$. It may be regarded as the extension of the
trivial weight $\wt(v)=\delta_{v,0}$ for $v\in M$, where $\delta$ denotes the Kronecker delta function.
The latter itself is the Hamming weight on~$M$ with respect to the alphabet~$M$.
\end{exa}

There are, of course, also weight functions on~$M^n$ that do not arise as such an extension.
One such weight is the Rosenbloom-Tsfasman weight, which is a special case of general poset weights; see also Remark~\ref{R-OtherWeights}(d).
This weight has been introduced by Rosenbloom and Tsfasman in~\cite{RoTs97} and plays a specific role for matrix
codes; see for instance~\cite{Skr07} for the relevance of the Rosenbloom-Tsfasman weight for detecting matrix codes
with large Hamming distance.

\begin{defi}\label{D-RTWeight}
The Rosenbloom-Tsfasman weight (RT-weight) of a vector $v = (v_1, \ldots, v_n) \in M^n$ is defined as
\[
  \wtRT(v) = \left\{\begin{array}{cl}
      0, & v = 0, \\[.5ex]
      \text{max} \{i\mid v_i \neq 0\}, & \text{otherwise.}
   \end{array}\right.
\]
\end{defi}

An important role in ring-linear coding is played by the homogeneous weight.
\begin{defi}\label{D-homogWt}
A weight function $\omega:M\longrightarrow\C$ on a finite module~$\lR{M}$ is called \emph{(normalized left) homogeneous} if
\begin{romanlist}
\item $\omega(v)=\omega(w)$ for all $v,\,w\in M$ such that $Rv=Rw$.
\item $\sum_{w\in Rv}\omega(w)=|Rv|$ for all $v\neq 0$.
\end{romanlist}
\end{defi}
In~\cite[Thm.~4.4]{GNW04} Greferath et al.\ establish the existence and uniqueness of the homogeneous weight on arbitrary finite modules.
For finite Frobenius bimodules a very useful formula for the homogeneous weight has been established by Wood~\cite[Prop.~9]{Wo14}.
It is a straightforward generalization of \cite[p.~412]{Hon01} by Honold, where the same result was derived for finite Frobenius rings.
The proof follows from observing that $\sum_{\alpha\in R^*}\alpha\chi=\sum_{\alpha\in R^*}\chi\alpha$ is the sum of all generating
characters of~$M$ and by verifying~(i) and~(ii) of Definition~\ref{D-homogWt}.

\begin{thm}\label{T-HomogChar}
Let $\biMod{R}{M}{\!R}$ be a finite Frobenius bimodule with generating character~$\chi$.
Then the homogeneous weight on~$M$ is given by
\[
    \omega(v)=1-\frac{1}{|R^*|}\sum_{\alpha\in R^*}\chi(v\alpha)=1-\frac{1}{|R^*|}\sum_{\alpha\in R^*}\chi(\alpha v)\ \text{ for all }v\in M.
\]
\end{thm}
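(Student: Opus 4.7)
My plan is to define
\[
\omega(v) := 1 - \frac{1}{|R^*|}\sum_{\alpha\in R^*}\chi(\alpha v), \qquad v\in M,
\]
verify that $\omega$ satisfies the two axioms of Definition~\ref{D-homogWt}, and invoke the uniqueness of the homogeneous weight (\cite[Thm.~4.4]{GNW04}) to conclude that $\omega$ is the weight in question.

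First I would reconcile the two expressions in the statement. By Remark~\ref{R-GenCharMod} together with the isomorphisms $\alphal,\alphar$ in~\eqref{e-alpha}, the sets $\{u\chi\mid u\in R^*\}$ and $\{\chi u\mid u\in R^*\}$ are precisely the left, respectively right, generating characters of $M$. Theorem~\ref{T-FrobBimod} shows these sets coincide, so as functions on $M$ we have the identity $\sum_{\alpha\in R^*}\alpha\chi=\sum_{\alpha\in R^*}\chi\alpha$. Evaluating at $v$ and using $(\alpha\chi)(v)=\chi(v\alpha)$ and $(\chi\alpha)(v)=\chi(\alpha v)$ delivers $\sum_\alpha\chi(v\alpha)=\sum_\alpha\chi(\alpha v)$, which is the second equality in the theorem.

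Next I would verify the axioms for~$\omega$. Plainly $\omega(0)=0$. For axiom~(i), if $Rv=Rw$ then Theorem~\ref{T-LeftSim} gives $\beta\in R^*$ with $w=\beta v$; the substitution $\alpha'=\alpha\beta$ on $R^*$ converts $\sum_\alpha\chi(\alpha w)=\sum_\alpha\chi(\alpha\beta v)$ into $\sum_{\alpha'}\chi(\alpha' v)$, so $\omega(w)=\omega(v)$. For axiom~(ii), interchange the summation order to obtain
\[
\sum_{w\in Rv}\omega(w) = |Rv| - \frac{1}{|R^*|}\sum_{\alpha\in R^*}\sum_{w\in Rv}(\chi\alpha)(w).
\]
For each $\alpha\in R^*$ the character $\chi\alpha$ is a generating character of $M$, so by Theorem~\ref{T-FrobBimod}(iii) its kernel contains no nonzero left submodule. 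For $v\neq 0$ the restriction of $\chi\alpha$ to the additive subgroup $(Rv,+)$ is therefore non-principal, and Remark~\ref{R-CharGroup}(4) forces the inner sum to vanish. Hence $\sum_{w\in Rv}\omega(w)=|Rv|$, as required.

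I do not expect any real obstacle beyond careful bookkeeping of the left/right bimodule structure on $\wM$: the identity $\chi(\alpha v)=(\chi\alpha)(v)$ is what makes the reindexing in~(i) immediate and reduces~(ii) to the observation that each $\chi\alpha$ is a generating character. The argument is in essence the character-theoretic derivation from \cite[p.~412]{Hon01} for the bimodule $\wR$, transferred to an arbitrary Frobenius bimodule $M$ via a fixed generating character.
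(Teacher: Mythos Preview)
Your proposal is correct and follows essentially the same approach as the paper: the paper's proof sketch is exactly to observe that $\sum_{\alpha\in R^*}\alpha\chi=\sum_{\alpha\in R^*}\chi\alpha$ is the sum of all generating characters of~$M$ and then to verify axioms~(i) and~(ii) of Definition~\ref{D-homogWt}, invoking uniqueness from~\cite[Thm.~4.4]{GNW04}. Your use of Theorem~\ref{T-LeftSim} for~(i) and of Theorem~\ref{T-FrobBimod}(iii) together with Remark~\ref{R-CharGroup}(4) for~(ii) is precisely what the paper has in mind.
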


Of course, the homogeneous weight depends highly on the module structure. This is shown in the first example.
\begin{exa}\label{E-HomogRingMod}
\begin{alphalist}
\item \cite[Ex.~3.4(b)]{GL14homog} Consider the Frobenius ring $R=\Z_2\times\Z_2$.
      Then the homogeneous weight on~$R$ is given by $\omega(0,0)=0=\omega(1,1)$ and $\omega(0,1)=\omega(1,0)=2$.
      In particular, there are nonzero elements with zero weight.
      Since $\{0\}$ is a block of the dual of any partition \cite[Rem.~2.2(a)]{GL15Fourier} this implies
      that the induced partition $\cP_{\text{hom}}=(P_i)_{i=0,2}$, where $P_i=\{r\in R\mid \omega(r)=i\}$ is not reflexive.
      On the other hand, the homogeneous weight on the $\Z_2$-module $M=\Z_2^2$ (which is of course not a Frobenius bimodule) is easily
      seen to be given by $\hat{\omega}(v)=2$ for all $v\in M\backslash\{(0,0)\}$.
      Finally, if we take the homogeneous weight on the ring~$\Z_2$, which is simply the Hamming weight, and extend it to~$M$ as
      in~\eqref{e-AddWeights}, we obtain $\tilde{\omega}(1,0)=\tilde{\omega}(0,1)=1$ and $\tilde{\omega}(1,1)=2$.
\item Let $R$ be the non-Frobenius ring $R=\F_2[x,y]/\ideal{x^2,xy,y^2}=\{a+bx+cy\mid a,b,c\in\F_2\}$ computing modulo $\ideal{x^2,xy,y^2}$.
       Its group of units is $R^*=\{1,1+x,1+y,1+x+y\}$.
      Consider the Frobenius bimodule $M:=\wR$.
      We compute $M=\spann_{\F_2}\{\chi_1,\chi_2,\chi_3\}$, where
      \begin{align*}
         &\chi_1(x)=\chi_1(y)=1,\,\chi_1(1)=-1,\\
         &\chi_2(1)=\chi_2(y)=1,\,\chi_2(x)=-1,\\
         &\chi_3(1)=\chi_3(x)=1,\,\chi_3(y)=-1.
      \end{align*}
      This determines the remaining values of these and all other characters in~$M$.
      Using Remark~\ref{R-GenCharMod} we obtain the generating character $\psi$ of~$M$ defined as $\psi(\chi)=\chi(1)$.
      Now Theorem~\ref{T-HomogChar} yields the homogeneous weight $\omega(\chi)=1-\frac{1}{4}(\chi(1)+\chi(1+x)+\chi(1+y)+\chi(1+x+y))$, thus
      \[
        \omega(\varepsilon)=0,\ \omega(\chi_1)=2, \ \omega(\chi)=1\text{ for all }\chi\in M\,\backslash\,\{\varepsilon,\chi_1\}.
      \]
      This is also obtained by observing that
      \[
          R\chi_1=\{\varepsilon,\,\chi_1\}\ \text{ and }\ R\chi=\{\varepsilon,\,\chi,\,\chi_1,\,\chi+\chi_1\} \text{ for all }
          \chi\in M\,\backslash\,\{\varepsilon,\chi_1\}.
      \]
\end{alphalist}
\end{exa}

We now turn to weight-preserving maps.

\begin{defi}\label{D-code}
Let~$\lR{M}$ be any finite module.
A submodule of $\lR{(M^n)}$ is called a \emph{code over the alphabet $M$ of length $n$}.
\end{defi}

\begin{defi}\label{D-WeightFct}
Let $\wt$ be a weight function on the finite module~$\lR{M}$ and let $\cC\subseteq\lR{M^n}$ be a code over~$M$ of length~$n$.
A linear map $f:\cC\longrightarrow M^n$ is called \emph{$\wt$-preserving} if $\wt(v)=\wt(f(v))$ for all $v\in\cC$.
If~$f$ is also injective it is called an \emph{$\wt$-isometry}.
\end{defi}
Note that if the weight function satisfies $\wt(x)=0\Longleftrightarrow x=0$, then any $R$-linear $\wt$-preserving map from~$\cC$ to~$M^n$
is a $\wt$-isometry.
In particular, any $R$-linear Hamming-weight preserving map or RT-weight preserving map is an isometry for the respective weight.

From now on let~$M$ be finite Frobenius bimodule with generating character~$\chi$ over the finite ring~$R$.
Then the above isometries on~$\lR{M^n}$ can easily be described.
Recall from~\eqref{e-sigma} that $R\cong S:=\End(\lR{M})$ naturally.
As a consequence, we have an induced group isomorphism $R^*\cong\Aut(\lR{M})$.
Furthermore, one obtains straightforwardly the ring isomorphism
\begin{equation}\label{e-EndMn}
   \cM_n(R)\cong \End(\lR{M^n}), \quad A\longmapsto
   \left\{\begin{array}{ccc}\lR{M^n}&\longrightarrow&\lR{M^n}\\ v&\longmapsto& vA\end{array}\right.
\end{equation}
where $\cM_n(R)$ denotes the ring of $n\times n$-matrices with entries in~$R$.
As a consequence we obtain $\GL_n(R)\cong\Aut(\lR{M^n})$.

Denote the group of invertible lower triangular matrices as
\begin{equation}\label{e-LTR}
   \LT_n(R):=\{A\in\GL_n(R)\mid A\text{ is lower triangular}\}.
\end{equation}
Furthermore, we define the group of monomial matrices over~$R$ as
\[
   \Mon_n(R):=\{A\in\GL_n(R)\mid A \text{ has exactly one nonzero entry in each row and column}\}.
\]
Note that by invertibility the nonzero elements in~$A$ are units.
We also define, for a subgroup $\cU\leq\GL_n(R)$, the group of $\cU$-monomial matrices as
\[
  \Mon_{\,\cU,n}(R):=\{A\in\Mon_n(R)\mid \text{the nonzero entries of~$A$ are in~$\cU$}\}.
\]

\begin{prop}\label{P-Isom}
Let $f\in\End(\lR{M^n})$ and $A\in\cM_n(R)$ be such that $f(v)=vA$ for all $v\in M^n$.
\begin{alphalist}
\item $f$ is a $\wtH$-isometry iff $A\in\Mon_n(R)$.
\item $f$ is a $\wtRT$-isometry iff $A\in\LT_n(R)$.
\end{alphalist}
\end{prop}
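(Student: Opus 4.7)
My plan is to handle each direction of each part using direct arguments grounded in two facts from Section~\ref{S-basics}: the right module $\rR M$ is faithful (since $\rR M \cong \rR{\wR}$), so $M A_{i,k} = 0$ forces $A_{i,k} = 0$; and the ring isomorphism $\sigma : R \to \End(\lR M)$ from~\eqref{e-sigma} identifies $R^*$ with $\Aut(\lR M)$, so $vu \neq 0$ whenever $v \in M \setminus \{0\}$ and $u \in R^*$, and conversely, bijectivity of $v \mapsto vr$ on $M$ forces $r \in R^*$.

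\emph{Reverse directions.} For~(a), factor $A \in \Mon_n(R)$ as $A = \Pi D$ with $\Pi$ a permutation matrix and $D$ diagonal with unit entries; right multiplication by $A$ permutes the coordinates of $v$ and rescales each by a unit, hence preserves $\wtH$. For~(b), using that an invertible lower triangular matrix has units on the diagonal (a standard fact, provable by reading $AB = I$ entrywise from the top-left corner and iterating), one checks directly that when $\wtRT(v) = m$ lower-triangularity gives $(vA)_k = \sum_{j \geq k} v_j A_{j,k} = 0$ for $k > m$, while $(vA)_m = v_m A_{m,m} \neq 0$; hence $\wtRT(vA) = m$.

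\emph{Forward direction of~(a).} Since $\wtH$ separates zero, $f$ is bijective and so $A \in \GL_n(R)$. I test on vectors of $M^n$ with exactly one nonzero entry $v_i$ in position~$i$. Suppose row~$i$ of $A$ has two nonzero entries $A_{i,j_1}, A_{i,j_2}$ with $j_1 \neq j_2$. Faithfulness of $\rR M$ yields $v, v' \in M$ with $v A_{i,j_1} \neq 0$ and $v' A_{i,j_2} \neq 0$. A short case analysis on whether the mixed products $v A_{i,j_2}$ and $v' A_{i,j_1}$ vanish -- with the subcase of both vanishing handled by the test vector having $v+v'$ in position~$i$, and the degenerate sub-subcase $v + v' = 0$ excluded because $v' = -v$ would contradict one of our assumptions -- produces a weight-$1$ input whose image has weight $\geq 2$, contradicting isometry. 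Hence each row of $A$ has at most one nonzero entry; invertibility of $A$ then rules out zero rows (else $AB \neq I$) and zero columns (else the image of $f$ sits in a coordinate hyperplane), so the nonzero pattern is a permutation. Factoring $A = \Pi D$ then forces $D$ invertible, placing the nonzero entries in $R^*$.

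\emph{Forward direction of~(b).} Testing on the same vectors with $v_i \in M \setminus \{0\}$ in position~$i$, the identity $\wtRT(vA) = i$ forces $v_i A_{i,k} = 0$ for all $k > i$ and $v_i A_{i,i} \neq 0$. Letting $v_i$ range over $M \setminus \{0\}$, faithfulness gives $A_{i,k} = 0$ for $k > i$, so $A$ is lower triangular; and the map $v \mapsto v A_{i,i}$ on $M$ is injective, hence bijective on the finite set $M$, so $A_{i,i} \in R^*$ via~\eqref{e-sigma}. A lower triangular matrix with unit diagonal is invertible, so $A \in \LT_n(R)$. The main obstacle throughout is the case analysis for row-uniqueness in~(a); in particular, the degenerate possibility $v + v' = 0$ must be excluded by returning to the defining properties of $v, v'$.
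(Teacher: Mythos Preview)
Your proof is correct and follows essentially the same approach as the paper: both test the map on vectors supported in a single coordinate (the paper phrases this as restricting $f$ to the submodules $\pi_i(M^n)$). The paper is terser---it asserts that these restrictions induce isomorphisms $\pi_i(M^n)\to\pi_j(M^n)$ without spelling out why the image lands in a single coordinate---whereas your case analysis on $vA_{i,j_1},\,v'A_{i,j_2}$ and the sum $v+v'$ makes that step explicit; the underlying argument is the same.
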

\begin{proof}
In both cases the `if-part' is clear.
For the `only-if part' note first that in both cases~$f$ is an isometry because both weights are
zero only for the zero vector.
Thus $A\in\GL_n(R)$.
\\
(a)  Consider the restrictions of~$f$ to the submodules
$\pi_i(M^n)=\{(0,\ldots,0,v,0,\ldots,0)\mid v\in M\}$, where the nonzero element is at the $i$th position.
These restrictions induce isomorphisms from $\pi_i(M^n)$ to some $\pi_j(M^n)$, which in turn implies that
$A_{ij}\in R^*$ whereas all other entries in the $i$th row of~$A$ are zero.
Using once more the bijectivity of~$f$ we conclude that also each column of~$A$ has only one nonzero entry.
\\
(b) In this case the result follows by considering consecutively the restrictions
to the submodules $\pi_i(M^n)$ for $i=1,\ldots,n$.
\end{proof}

We now turn to the MacWilliams extension property.
The following definition deviates from other uses in the literature in the sense that the length of the codes
is fixed through the module~$V$.
This is necessary so that we can also deal with weights that do not arise as the
extension of a weight on the alphabet as in~\eqref{e-AddWeights}, such as the RT-weight.

\begin{defi}\label{D-MacWExt}
Let $\wt$ be a weight function on a module $\lR{V}$. Then $\wt$ \emph{satisfies the MacWilliams extension
property} if for any code $\cC\subseteq\lR{V}$ each $\wt$-preserving linear map $f:\cC\longrightarrow V$
can be extended to a $\wt$-preserving linear map on~$V$.
\end{defi}

The above definition covers various cases just for the Hamming weight alone.
For example, if $V=M^n$ is endowed with the Hamming weight with respect to the alphabet~$M$,
the above is the classical MacWilliams extension property (for module alphabets).
If, however, we consider the Hamming weight on~$V$ with respect to the alphabet~$V$, then the weight-preserving maps
are exactly the injective maps and the extension property asks for extending injective maps on submodules
to injective maps on the entire module.
This last case can easily be dealt with.

\begin{prop}\label{P-HammMn}
Let $\cC$ be a submodule of~$\lR{M^n}$.
Then any injective map $f:\cC\longrightarrow M^n$ extends to an automorphism on~$M^n$.
\end{prop}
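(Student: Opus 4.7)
The plan is to combine the injectivity of $\lR{M^n}$ as a left $R$-module with a classical injective-hull construction followed by a Krull--Schmidt gluing. First I would verify that $\lR{M^n}$ is injective. Since $\biMod{R}{M}{R}$ is a Frobenius bimodule, $\lR{M}\cong\lR{\wR}$, and the character module $\lR{\wR}\cong\Hom_{\Z}(R,\Q/\Z)$ is injective as a left $R$-module (by the standard $(\otimes,\Hom)$-adjunction together with the injectivity of $\Q/\Z$ over $\Z$). Finite direct sums of injectives being injective then gives that $\lR{M^n}$ is injective.

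Next I would set $K:=f(\cC)$ and, inside $M^n$, choose an injective hull $E_1$ of $\cC$ and an injective hull $E_2$ of $K$. By injectivity of $E_2$, the composition $\cC\xrightarrow{f}K\hookrightarrow E_2$ extends to an $R$-linear map $\tilde f:E_1\longrightarrow E_2$. Essentiality of $\cC$ in $E_1$ together with the injectivity of $\tilde f|_{\cC}=f$ forces $\ker\tilde f=0$. Moreover, $\tilde f(E_1)$ is an injective submodule of $E_2$, hence a direct summand, and since it contains the essential submodule $K$ its direct complement must vanish. Thus $\tilde f:E_1\longrightarrow E_2$ is an $R$-linear isomorphism extending $f$.

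Now I would apply Krull--Schmidt. Because $E_1$ and $E_2$ are injective, they split off in $M^n$: write $M^n=E_1\oplus C_1=E_2\oplus C_2$ for suitable submodules $C_1,C_2$. Since $R$ is finite, $M^n$ has finite length, so the Krull--Schmidt theorem applies to both decompositions. Combined with $E_1\cong E_2$, this yields $C_1\cong C_2$ as left $R$-modules. Choosing any $R$-linear isomorphism $h:C_1\longrightarrow C_2$ and setting $g:=\tilde f\oplus h$ relative to the chosen decompositions produces an automorphism $g\in\Aut(\lR{M^n})$ satisfying $g|_{\cC}=\tilde f|_{\cC}=f$.

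I expect the delicate step to be the one in the second paragraph: upgrading the injective-hull extension $\tilde f$ from a mere homomorphism to a genuine isomorphism requires both the essentiality property of $E_1\supseteq\cC$ (to kill the kernel) and the splitting-off of injective submodules of $E_2$ (to force surjectivity onto $E_2$). Once that is in hand, the Krull--Schmidt gluing in the third paragraph is routine.
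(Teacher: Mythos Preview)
Your proof is correct, but it takes a genuinely different route from the paper's proof. The paper explicitly acknowledges that a module-theoretic argument via injectivity is available (citing Wood and Dinh/L\'opez-Permouth, who work with the socle of pseudo-injective modules), and then \emph{deliberately} supplies an alternative character-theoretic proof in keeping with the paper's overall framework. That alternative proof proceeds in three steps: (i) use the double-annihilator property (Corollary~\ref{C-KMR}) to show that for each $x\in\cC$ the right $R$-module generated by the coordinates of $x$ equals that generated by the coordinates of $f(x)$, yielding local matrices $A_x\in\GL_n(R)$ with $f(x)=xA_x$ via Lemma~\ref{L-Local}; (ii) apply the partition-duality Lemma~\ref{L-AiInner} to pass to column-by-column matrices $A_i$ and assemble a single matrix~$A$ with $f(x)=xA$; (iii) repeat for $f^{-1}$ and invoke Theorem~\ref{T-LeftSim} on a generator matrix to upgrade~$A$ to a unit in $\GL_n(R)$.

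Your argument, by contrast, is purely homological: injectivity of $\lR{M^n}$, existence of injective hulls inside $M^n$, the essentiality trick to force $\tilde f$ to be an isomorphism between the hulls, and then a Krull--Schmidt cancellation on the complements. This is clean and self-contained, and it actually applies verbatim to any finite injective left $R$-module~$N$ in place of~$M^n$, not just a power of a Frobenius bimodule. What it does not do is illustrate the character-based machinery (double annihilators, orbit-partition duality) that the paper is developing; the authors' proof is written precisely to demonstrate that machinery in action. Both approaches are valid; yours is more classical module theory, theirs is tailored to the paper's theme.
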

Using the fact that any Frobenius bimodule~$M$ is injective~\cite[Thm.~2.1]{GNW04}, and thus so is~$M^n$, the
result follows easily from \cite[Prop.~5.1]{Wo09} by Wood which in turn is based on the case for codes over finite rings in
\cite[Prop.~3.2]{DiLP04} by Dinh/L{\'o}pez-Permouth and uses the properties of the socle of a pseudo-injective module.
Since our exposition is fully character-theoretic we wish to add an alternative proof.
It is based on the double-annihilator property that we derived in Corollary~\ref{C-KMR}.

Before doing so, we present the following simple lemma.
It will be useful for several situations later on.
Recall that for a ring~$S$ we denote its group of units by~$S^*$.
\begin{lem}\label{L-Local}
Let $\rR{M}$ be any finite module.
Let $\cC,\,\cC'\subseteq M^n$ and $f:\cC\longrightarrow \cC'$ a bijective map.
Suppose there exists a subring $S$ of the matrix ring $\cM_n(R)$ with the property
\begin{romanlist}
\item for all $v\in\cC$ there exists a matrix $A_v\in S$ such that $f(v)=vA_v$,
\item for all $w\in\cC'$ there exists a matrix $B_w\in S$ such that $f^{-1}(w)=wB_w$.
\end{romanlist}
Then there exists for every $v\in\cC$ a matrix $C_v\in S^*$ such that $f(v)=vC_v$.
\end{lem}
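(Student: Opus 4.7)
My plan is to derive the lemma from a right-module version of Bass' Theorem (Theorem~\ref{T-LeftSim}). The key observation will be that hypotheses~(i) and~(ii), taken together, force the cyclic right $S$-submodules of $M^n$ generated by $v$ and by $f(v)$ to coincide, at which point the existence of a uniform unit $C_v\in S^*$ implementing $f$ at $v$ becomes automatic.

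First, I would equip $M^n$ with a right $S$-module structure. Since $M$ is a finite right $R$-module, $M^n$ becomes a finite right $R$-module coordinatewise, and the usual row-vector-times-matrix product $(v,A)\longmapsto vA$ makes $M^n$ into a right $\cM_n(R)$-module, and hence into a right $S$-module by restriction. Because $R$ is finite, so is $\cM_n(R)$, and therefore so are $S$ and the module $M^n$.

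Next, I would fix $v\in\cC$ and set $w:=f(v)\in\cC'$. Hypothesis~(i) gives $w=vA_v\in vS$, while hypothesis~(ii), applied to $w$, gives $v=f^{-1}(w)=wB_w\in wS$. Combining these two containments yields the equality $vS=wS$ of cyclic right $S$-submodules of $M^n$.

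Finally, I would invoke the right-module analogue of Theorem~\ref{T-LeftSim}: for a finite right module over a finite ring, any two elements generating the same cyclic submodule differ by multiplication by a unit of the ring. Applied to the finite right $S$-module $M^n$ and the elements $v,w$ with $vS=wS$, this produces $C_v\in S^*$ with $w=vC_v$, i.e.\ $f(v)=vC_v$, as desired. The only step not explicitly stated in the paper is the right-module version of Bass' theorem, but this follows at once from the left-module form by passing to the opposite ring $S^{\mathrm{op}}$, so I expect no real obstacle here; the substance of the argument is entirely contained in the passage from the two one-sided local factorizations to the equality $vS=wS$.
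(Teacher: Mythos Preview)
Your proof is correct and follows essentially the same route as the paper's own proof: both derive $vS=wS$ from the two local factorizations and then invoke Theorem~\ref{T-LeftSim}. Your added remarks on the right $S$-module structure of $M^n$ and on passing to $S^{\mathrm{op}}$ for the right-sided version of Bass' theorem are details the paper leaves implicit.
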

\begin{proof}
Let $f(v)=w$. Then by assumption $w=vA_v$ and $v=wB_w$, and therefore the right $S$-modules $wS$ and $vS$ coincide.
Now the statement follows from Theorem~\ref{T-LeftSim}.
\end{proof}

Now we provide an alternative proof of Proposition~\ref{P-HammMn}.

\noindent {\it Proof of Proposition~\ref{P-HammMn} (following the local-global principle of~\cite{BGL15}).}
Let $f:\cC\longrightarrow M^n$ be an injective linear map.
\\
1) Let $f(x)=y$ for $x=(x_1,\ldots,x_n),\,y=(y_1,\ldots,y_n)$.
Set  $I:=x_1R+\ldots+x_n R,\,J:=y_1R+\ldots+y_nR\subseteq \rR{M}$.
Then injectivity of~$f$ implies $\lp{I}=\lp{J}$ and thus $I=J$ thanks to Corollary~\ref{C-KMR}.
\\
2) Let again $f(x)=y$. Then by~1) $y_j\in\sum_{i=1}^n x_i R$ and $x_j\in\sum_{i=1}^n y_i R$  for all~$j$.
This leads to matrices $A_x,\,B_y\in S:=\cM_n(R)$ such that $y=xA_x$ and
$f^{-1}(y)=yB_y$ for all $x\in\cC$ and $y\in f(\cC)$ .
Now Lemma~\ref{L-Local} yields matrices~$A_x\in S^*=\GL_n(R)$ such that $f(x)=xA_x$ for all $x\in\cC$.
\\
3) Lemma~\ref{L-AiInner} implies the existence of matrices $A_i\in\GL_n(R)$ such that $\ideal{f(x),e_i}=xA_ie_i\T$ for all $i=1,\ldots,n$.
Set $A:=(A_1e_1\T,\ldots,A_n e_n\T)\in S$. Then by construction $f(x)=xA$ for all $x\in\cC$.
In the same way we obtain a matrix $B\in S$ such that $f^{-1}(y)=yB$ for all $y\in f(\cC)$.
Let now $v_1,\ldots,v_k\in M^n$ be generators of~$\lR{\cC}$.
Consider $M^{k\times n}$, the set of all matrices with entries in~$M$.
It is a right $S$-module in the obvious way.
Let $G\in M^{k\times n}$ be the matrix whose rows are $v_1,\ldots,v_k$ and set $GA=G'$.
Then by the above $G'B=G$.
Thus the right $S$-modules $GS$ and $G'S$ coincide, and thanks to Theorem~\ref{T-LeftSim} there exists a matrix
$U\in S^*=\GL_n(R)$ such that $GU=G'$.
This finally shows that $f(x)=xU$ for all $x\in\cC$.
Thus~$f$ extends to an automorphism on~$M^n$.
\hfill$\Box$

We now return to the general situation of Definition~\ref{D-MacWExt}.
We first show that the RT-weight satisfies the extension property on $\lR{M^n}$.
The following is the main step in doing so.

\begin{thm}\label{T-RTLT}
Let $n\in\N$ and let $\cC\subseteq\lR{M^n}$ be a code and $f: \cC \longrightarrow M^n$ a linear $\wtRT$-preserving map.
Then  for all $v\in \cC$ there exists a matrix $A_v\in\LT_n(R)$ such that $f(v) = vA_v$.
\end{thm}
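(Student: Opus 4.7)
The plan is to follow the local-global strategy already exploited in the proof of Proposition~\ref{P-HammMn}: first produce for each $v \in \cC$ a lower triangular matrix $A_v \in \cM_n(R)$ such that $f(v) = vA_v$ via the annihilator machinery of Section~\ref{S-basics}, then promote $A_v$ to an invertible lower triangular matrix using Lemma~\ref{L-Local}. The insight that makes the RT-weight tractable is that RT-preservation respects the filtration of~$M^n$ by top nonzero index: if $rv$ has its nonzero entries confined to the first $j-1$ coordinates, then so must $rf(v) = f(rv)$.

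Fix $v = (v_1, \ldots, v_n) \in \cC$ and $j \in \{1, \ldots, n\}$. The key claim is that $f(v)_j \in K_j := v_j R + v_{j+1} R + \cdots + v_n R$, where $K_j$ is regarded as a right $R$-submodule of~$M$. By Corollary~\ref{C-KMR} we have $({}^\perp K_j)^\perp = K_j$, so it suffices to show that $rf(v)_j = 0$ for every $r \in {}^\perp K_j$. For such an~$r$ we have $rv_i = 0$ for all $i \geq j$, whence $\wtRT(rv) < j$ (with the convention that $rv = 0$ trivially satisfies this). Since $\cC$ is a left $R$-submodule and $f$ is left $R$-linear, $rv \in \cC$ and $f(rv) = rf(v)$. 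RT-preservation then forces $\wtRT(rf(v)) = \wtRT(rv) < j$, so $(rf(v))_j = rf(v)_j = 0$, proving the claim. Consequently $f(v)_j = \sum_{i \geq j} v_i a_{ij}$ for suitable $a_{ij} \in R$, which assembles into a lower triangular matrix $A_v \in \cM_n(R)$ with $f(v) = vA_v$.

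It remains to secure invertibility, i.e., $A_v \in \LT_n(R)$. Since $\wtRT(x) = 0$ only when $x = 0$, the map~$f$ is injective, and the identical argument applied to the RT-preserving bijection $f^{-1}\colon f(\cC) \longrightarrow \cC$ yields for each $w \in f(\cC)$ a lower triangular matrix $B_w \in \cM_n(R)$ with $f^{-1}(w) = wB_w$. The set $T_n(R)$ of lower triangular matrices in $\cM_n(R)$ is a subring whose group of units is exactly $\LT_n(R)$: a lower triangular matrix is a unit in $T_n(R)$ iff its diagonal entries are units in~$R$, and one checks (e.g.\ by noting that a lower triangular matrix preserves the RT-filtration of~$R^n$ and hence, if invertible in $\cM_n(R)$, has an inverse that preserves it too) that an invertible lower triangular matrix automatically has a lower triangular inverse. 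Lemma~\ref{L-Local} applied with $S = T_n(R)$ then yields $A_v \in \LT_n(R)$ with $f(v) = vA_v$ for every $v \in \cC$, completing the proof. The only genuinely nontrivial ingredient is the local representation step, where one must marry RT-preservation with the right-sided double-annihilator identity of Corollary~\ref{C-KMR}; invertibility then falls out mechanically from the symmetric argument and Lemma~\ref{L-Local}.
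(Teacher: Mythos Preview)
Your proof is correct and follows essentially the same approach as the paper's: both use the double-annihilator identity of Corollary~\ref{C-KMR} on the right submodule $v_jR+\cdots+v_nR$ together with RT-preservation to force $f(v)_j$ into that submodule, and then appeal to Lemma~\ref{L-Local} (applied to the subring of lower triangular matrices) to upgrade the local matrices to units in $\LT_n(R)$. Your write-up is slightly more explicit in identifying the subring $S=T_n(R)$ and its unit group, but the argument is the same.
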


\begin{proof}
Let $v=(v_1,\ldots ,v_n)\in\cC$ and set $f(v)=u=(u_1,\ldots , u_n)$.
Fix $j\in\{1,\ldots,n\}$.
Then $\cV_j := v_jR + \cdots +v_nR$ is a submodule of $\rR{M}$.
This allows us to make use of the double annihilator property.
Recall the notation from~\eqref{e-Perps}.
Note that if $r\in\lp{\cV_j}$ then $\wtRT(rv) < j$.
As~$f$ is a linear $\wtRT$-preserving map we conclude
\[
   \wtRT(ru) = \wtRT(rf(v)) = \wtRT(f(rv)) = \wtRT(rv) < j.
\]
This implies in particular that $ru_j = 0$ and all of this shows that $\lp{\cV_j}\subseteq \lp{(u_jR)}$.
Thus Corollary~\ref{C-KMR} yields
\[
   u_jR=\rp{(\lp{(u_jR)})}\subseteq\rp{(\lp{\cV_j})}=\cV_j.
\]
As a consequence, $u_j = \sum_{i=j}^n v_ia_{ij}$ for some $a_{ij}\in R$.
Setting $a_{ij}=0$ for $i<j$ we obtain a lower triangular matrix $A_v=(a_{ij})$ such that
$u=f(v) = vA_v$.
We know that any linear $\wtRT$-preserving map is injective and thus $f$ induces a $\wtRT$-isometry between $\cC$ and $\cC':=f(\cC)$.
Using the same argument as above for $f^{-1}$, we
observe that~(i) and~(ii) of Lemma~\ref{L-Local} are satisfied, and the lemma thus tells
us that we may assume that $A_v$ is invertible for each~$v\in\cC$.
\end{proof}

The matrices produced in Theorem~\ref{T-RTLT} are \emph{local matrices} as they depend on~$v$.
In the next step we collect this local information and create a global matrix that describes the given RT-isometry.
This will allow us to extend the isometry to a map on the entire module $M^n$.

\begin{thm}\label{T-RTExt}
Let $n\in\N$ and $\cC\subseteq\lR{M^n}$ be a code and $f: \cC \longrightarrow M^n$ be an RT-isometry.
Then there exists a matrix $A\in\LT_n(R)$ such that $f(v)=vA$ for all $v\in\cC$ and thus~$f$ extends to the
RT-isometry $v\longmapsto vA$ on $M^n$.
\end{thm}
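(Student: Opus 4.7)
The plan is to follow the same blueprint as the alternative proof of Proposition~\ref{P-HammMn}, but with the monomial group replaced by $\LT_n(R)$. Theorem~\ref{T-RTLT} already provides, for every $v\in\cC$, a local matrix $A_v\in\LT_n(R)$ with $f(v)=vA_v$; what remains is to replace this local family by a single global invertible lower triangular matrix~$A$. Proposition~\ref{P-Isom}(b) then immediately gives the extension $v\mapsto vA$ to all of $M^n$.

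First I would apply Lemma~\ref{L-AiInner} with $\cU=\LT_n(R)$ to each standard basis vector $e_i\in R^n$, obtaining matrices $A_1,\ldots,A_n\in\LT_n(R)$ such that $\ideal{f(x),e_i}=xA_ie_i^{\sf T}$ for all $x\in\cC$. Because each $A_i$ is lower triangular, its $i$-th column $A_ie_i^{\sf T}$ has zero entries in positions $1,\ldots,i-1$. Hence assembling these columns into a single matrix $A\in\cM_n(R)$ yields a lower triangular matrix (not yet known to be invertible) satisfying $f(x)=xA$ for all $x\in\cC$. Applying the same argument to the RT-isometry $f^{-1}\colon f(\cC)\longrightarrow\cC$ produces a lower triangular matrix $B\in\cM_n(R)$ with $f^{-1}(y)=yB$ for all $y\in f(\cC)$.

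Upgrading $A$ to an invertible matrix is the main step, mirroring the final move in the alternative proof of Proposition~\ref{P-HammMn}. Let $T_n(R)$ denote the subring of $\cM_n(R)$ consisting of all lower triangular matrices; over a finite ring its group of units equals $\LT_n(R)$. Choose generators $v_1,\ldots,v_k$ of $\lR{\cC}$ and form the matrix $G\in M^{k\times n}$ whose $i$-th row is $v_i$. Then $G':=GA$ has $i$-th row $f(v_i)$, and the identity $f^{-1}(f(v_i))=v_i$ gives $G'B=G$. Hence the right $T_n(R)$-submodules $GT_n(R)$ and $G'T_n(R)$ of $M^{k\times n}$ coincide, and Theorem~\ref{T-LeftSim} (in its right-module analogue, applied to the finite right $T_n(R)$-module $M^{k\times n}$) supplies a unit $U\in T_n(R)^*=\LT_n(R)$ with $GU=G'$. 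Then $v_iU=f(v_i)$ for each $i$, and left linearity of $f$ extends this to $f(x)=xU$ for all $x\in\cC$. Setting $A:=U$ finishes the proof.

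The main obstacle is exactly this invertibility step: Lemma~\ref{L-AiInner} patches columns taken from invertible lower triangular matrices, but the resulting $A$ lives only in the ambient subring $T_n(R)$ and is not a priori a unit of it. The generator-matrix trick combined with Theorem~\ref{T-LeftSim} is precisely what promotes this ring-level matrix to one in $\LT_n(R)$ while preserving lower-triangularity; everything else is bookkeeping parallel to the Hamming case.
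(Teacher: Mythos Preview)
Your argument is correct, but you have taken an unnecessary detour at precisely the point you flag as the ``main obstacle.'' The paper's proof agrees with yours through the application of Lemma~\ref{L-AiInner}, but then simply observes that the assembled matrix $A=(A_1e_1^{\sf T},\ldots,A_ne_n^{\sf T})$ is already in $\LT_n(R)$. The reason is immediate: the $i$-th column of~$A$ is the $i$-th column of $A_i\in\LT_n(R)$, so the $(i,i)$-entry of~$A$ equals the $(i,i)$-entry of~$A_i$, which is a unit because an invertible lower triangular matrix over any ring has units on its diagonal. A lower triangular matrix with unit diagonal is invertible (write it as $D(I+N)$ with~$D$ diagonal invertible and~$N$ strictly lower triangular, hence nilpotent). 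Thus the invertibility of~$A$ is automatic, and the proof ends here.

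Your generator-matrix argument via Theorem~\ref{T-LeftSim} is valid and mirrors the alternative proof of Proposition~\ref{P-HammMn} faithfully, but in that setting it is genuinely needed: the columns assembled there come from arbitrary matrices in $\GL_n(R)$, and patching columns of invertible matrices does not in general yield an invertible matrix. For lower triangular matrices the diagonal structure does all the work, so your Step~3 (constructing~$B$, forming $GT_n(R)=G'T_n(R)$, and invoking Bass) can be dropped entirely. What your longer route does buy is a template that would survive in situations where the relevant matrix group is not characterized by a diagonal condition; here, however, it is superfluous.
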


\begin{proof}
By Theorem~\ref{T-RTLT}, for all $v\in \cC$ there exists $A_v \in \LT_n(R)$ such that $f(v) = vA_v$.
Let $\{e_1, \ldots, e_n\}$ be the standard basis for $\lR{R^n}$.
Then Lemma~\ref{L-AiInner} implies that for each $i = 1, \ldots, n$ there exists $A_i \in \LT_n(R)$ such that
$\ideal{f(v), e_i} =  vA_ie_i^\top$ for all $v\in\cC$. Define the matrix
$A = (A_1e_1^\top, \ldots, A_ne_n^\top)$.
Then~$A$ is clearly lower triangular and invertible because each~$A_i$ is.
Furthermore, by construction $f(v) = vA$ for all $v\in \cC$.
Clearly the map~$f$ extends to the RT-isometry $v\longmapsto vA$  on~$M^n$.
\end{proof}

In the next part we focus on the extension theorem for the Hamming weight and some other weights.
In order to cover several cases together we formulate the following result.
Recall from Definition~\ref{D-DualPartMod} that for a partition of~$M^n$ the $\chi$-left dual is a partition of~$R^n$.
As before, we denote by~$e_i$ the standard basis vectors of~$R^n$.
\begin{thm}\label{T-PartExt}
Let~$\cP$ be a reflexive partition of~$\lR{M^n}$ and let $\cC\subseteq\lR{M^n}$ be a code.
Suppose $f: \cC \longrightarrow M^n$ is a linear map that preserves the partition, that is,
\begin{equation}\label{e-vfv}
    v\widesim_{\cP}f(v)\ \text{ for all }v\in\cC.
\end{equation}
Assume further that $S\subseteq R\setminus\{0\}$ is a subset such that the set
$Q:=\{s e_i\mid s\in S,\,i=1,\ldots,n\}$ is a block of the dual partition~$\wcPchil$.
Then
\begin{alphalist}
\item If $S$ is a subgroup of~$R^*$ there exists a matrix $A\in\Mon_{S,n}(R)$ such that $f(v)=vA$ for all $v\in\cC$.
\item If $R^*S:=\{\alpha s\mid \alpha\in R^*,\,s\in S\}=S$ and $1\in S$ there exists a matrix $A\in\Mon_n(R)$ such that
$f(v)=vA$ for all $v\in\cC$.
\end{alphalist}
\end{thm}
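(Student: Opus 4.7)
The plan is to mimic the proofs of Theorem~\ref{T-RTExt} and the alternative proof of Proposition~\ref{P-HammMn}: first establish a local matrix representation $f(v)=vA_v$ with $A_v$ in the appropriate monomial group, then globalize via Lemma~\ref{L-AiInner}.

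First I would exploit reflexivity of $\cP$. By Theorem~\ref{T-chiBiDual}, $\cP=\widehat{\wcPchil}^{[\chi,r]}$, so the hypothesis $v\sim_{\cP}f(v)$ translates, for every block $P$ of $\wcPchil$, into
\[
\sum_{r\in P}\chi(\ideal{v,r})=\sum_{r\in P}\chi(\ideal{f(v),r})\quad\forall v\in\cC.
\]
Specializing to $P=Q=\{se_i:s\in S,\,i\in\{1,\ldots,n\}\}$ and expanding gives
\[
\sum_{i=1}^{n}\sum_{s\in S}\chi(v_i s)=\sum_{i=1}^{n}\sum_{s\in S}\chi(f(v)_i s)\quad\forall v\in\cC.
\]
The left summands are $\chi\circ\mu_{s,i}$ with $\mu_{s,i}(v):=v_i s\in\Hom(\lR{\cC},\lR{M})$, and the right ones are $\chi\circ\nu_{s,i}$ with $\nu_{s,i}(v):=f(v)_i s$. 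By Remark~\ref{R-CharGroup}(6) the two multisets of characters on $\cC$ coincide, and the injectivity of $\Hom(\lR{\cC},\lR{M})\hookrightarrow\widehat{\cC}$ from Theorem~\ref{T-FrobBimod} lifts this to multiset equality of linear maps, hence pointwise to
\[
\{\!\{v_i s:(s,i)\in S\times\{1,\ldots,n\}\}\!\}=\{\!\{f(v)_i s:(s,i)\in S\times\{1,\ldots,n\}\}\!\}\ \text{in }M,\ \forall v\in\cC.
\]

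Second, I would extract a local monomial matrix. Since $1\in S$, the projection $\mu_{1,i}:v\mapsto v_i$ appears on the LHS and must equal some $\nu_{s'_i,j_i}$, yielding $v_i=f(v)_{j_i}s'_i$ for all $v\in\cC$. In case (a) with $S\leq R^*$ one inverts to $f(v)_{j_i}=v_i(s'_i)^{-1}$ with $(s'_i)^{-1}\in S$. In case (b) with $R^*S=S$, the $R^*$-invariance of the multisets together with the right-module version of Theorem~\ref{T-LeftSim} (available since $M$ is a Frobenius bimodule) allows one to select the match with $s'_i\in R^*$. A multiplicity count on the pointwise identity, partitioned via $R^*$-orbits in $M$, then forces $i\mapsto j_i$ to be a permutation of $\{1,\ldots,n\}$. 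Assembling the data yields $A_v\in\Mon_{S,n}(R)$ (case (a)) respectively $A_v\in\Mon_n(R)$ (case (b)) with $f(v)=vA_v$.

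Third, with local matrices $A_v$ in the subgroup $\cU:=\Mon_{S,n}(R)$ respectively $\cU:=\Mon_n(R)$ of $\GL_n(R)$, Lemma~\ref{L-AiInner} produces matrices $A_1,\ldots,A_n\in\cU$ with $\ideal{f(v),e_i}=vA_ie_i\T$ for all $v\in\cC$ and $i$. Setting $A:=(A_1e_1\T,\ldots,A_ne_n\T)$ gives $f(v)=vA$ for all $v\in\cC$. Each column of $A$ carries exactly one nonzero entry in $S$ (case (a)) or $R^*$ (case (b)); the permutation consistency from the local step ensures the nonzero positions across columns form a permutation, so $A$ is monomial. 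As a backup, one can mirror the proof of Proposition~\ref{P-HammMn}: work simultaneously with $f$ and $f^{-1}$ on $f(\cC)$ (injectivity of $f$ follows from the partition structure together with the unit matching above) and invoke Theorem~\ref{T-LeftSim} on a generator matrix of $\cC$ inside $M^{k\times n}$ to obtain a global $U\in\cU$ with $f(v)=vU$ for all $v\in\cC$.

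The main obstacle will be the combinatorial argument in the second step: deducing that $j_i$ can be chosen as a permutation from the pointwise multiset identity when some projections $\mu_i$ coincide on $\cC$ or when $R^*$-orbits degenerate (e.g.\ for $v_i=0$). The $R^*$-invariance of $S$ and the right-module analogue of Theorem~\ref{T-LeftSim} should, through careful multiplicity bookkeeping on the multiset equation, break the ambiguity and secure the monomial structure needed to conclude.
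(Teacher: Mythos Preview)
Your first step is correct and matches the paper: reflexivity plus Theorem~\ref{T-chiBiDual} gives the character-sum identity over the block $Q$, and Remark~\ref{R-CharGroup}(6) together with Theorem~\ref{T-FrobBimod} upgrades this to a multiset equality $\{\!\{\mu_{s,i}\}\!\}=\{\!\{\nu_{s,i}\}\!\}$ in $\Hom(\lR{\cC},\lR{M})$, where $\mu_{s,i}=\pi_i s$ and $\nu_{s,i}=f_i s$.

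From here, however, you take an unnecessary detour. The multiset identity is already a \emph{global} statement about the coordinate functions $f_i$ and $\pi_i$; there is no need to manufacture local matrices $A_v$ and then globalize via Lemma~\ref{L-AiInner}. In fact your third step has its own gap: the matrix $A=(A_1e_1\T,\ldots,A_ne_n\T)$ assembled from different monomial matrices $A_i\in\cU$ need not be monomial, since two columns could have their nonzero entry in the same row, and nothing in the ``local step'' prevents this. The paper bypasses this entirely by staying at the level of linear maps.

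The paper's iteration runs as follows. Matching $\nu_{1,1}=f_1$ against the other side yields $f_1=\pi_{\tau(1)}s_1$ for some $s_1\in S$ and $\tau(1)\in\{1,\ldots,n\}$. One then removes the sub-multisets $\{f_1 s:s\in S\}$ and $\{\pi_{\tau(1)}s:s\in S\}$ from the two sides; these coincide precisely because $s_1 S=S$. Iterating on the reduced identity forces $\tau(2)\neq\tau(1)$, etc., and this cancellation is what produces the permutation. Your ``multiplicity count on the pointwise identity, partitioned via $R^*$-orbits in $M$'' is too coarse: passing to pointwise values in $M$ discards the linear-map structure and does not give the permutation.

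The genuine gap is in case~(b). The cancellation step above needs $s_1 S=S$, hence $s_1\in R^*$. Your appeal to Theorem~\ref{T-LeftSim} is incomplete: from $f_1=\pi_{\tau(1)}s_1$ you only get $f_1 R\subseteq\pi_{\tau(1)}R$, not equality, so the theorem does not apply. The paper's device is to first choose the index so that $f_1 R$ is \emph{maximal} among all the cyclic right modules $f_i R,\,\pi_j R$ (the roles of $f$ and $\id$ being symmetric, this is legitimate). Then $f_1 R\subseteq\pi_{\tau(1)}R$ forces equality by maximality, and now Theorem~\ref{T-LeftSim} supplies a unit $u_1\in R^*$ with $f_1=\pi_{\tau(1)}u_1$. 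Since the hypothesis $R^*S=S$, $1\in S$ gives $R^*\subseteq S$, the cancellation $u_1 S=S$ goes through and the induction proceeds. Without this maximality argument your case~(b) does not close.
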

Note that the requirement in~(b) implies that $R^*\subseteq S$.

\begin{proof}
The group $\Hom(\lR{\cC},\lR{M})$ of left $R$-linear maps is a right $R$-module via $(gr)(v)=g(v)r$ for all $r\in R$ and $v\in\cC$.
Denote by $f_1,\ldots,f_n$ the coordinate functions of~$f$ and by $\pi_1,\ldots,\pi_n$ the projections of~$M^n$ onto its components.
Then $f_i,\,\pi_i\in\Hom(\lR{\cC},\lR{M})$.

We have to show that there exists a permutation~$\tau\in S_n$ and $s_1,\ldots,s_n\in R^*$ such that
$f_i=\pi_{\tau(i)}s_i$, and where $s_i\in S$ if $S$ is a subgroup of~$R^*$.

Since~$Q$ is a block of~$\wcPchil$ and~$\cP$ is reflexive,~\eqref{e-vfv} along with Theorem~\ref{T-chiBiDual} tells us that
\[
   \sum_{y\in Q}\chi(\ideal{f(v),y})=\sum_{y\in Q}\chi(\ideal{v,y})\text{ for all }v\in\cC.
\]
Hence
\begin{equation}\label{e-charSum}
   \sum_{y\in Q}\chi(\ideal{f(\sbt),y})=\sum_{y\in Q}\chi(\ideal{\sbt,y})
\end{equation}
as sums of characters on~$\cC$.
By Remark~\ref{R-CharGroup}(6) the character $\chi(\ideal{f(\sbt),e_1})$ must appear on the right hand side of~\eqref{e-charSum}.
Hence there exists some~$s_1\in S$ and some $\tau(1)\in\{1,\ldots,n\}$ such that $\chi(\ideal{f(\sbt),e_1})=\chi(\ideal{\sbt,s_1 e_{\tau(1)}})$.
Now Theorem~\ref{T-FrobBimod} implies that the linear maps $\ideal{f(\sbt),e_1}$ and $\ideal{\sbt, s_1e_{\tau(1)}}$ coincide on~$\cC$.
This means $f_1=\pi_{\tau(1)}s_1$.

It remains to show for this step that we may choose $s_1$ in~$R^*$.
In the case~(a) where $S$ is a subgroup of~$R^*$, this is clearly the case for $s_1$.
For the case~(b) we argue as follows.
Consider the submodules $f_1R,\ldots,f_nR,\pi_1R,\ldots,\pi_nR$ of the right module $\Hom(\lR{\cC},\lR{M})$.
Without loss of generality we may assume that $f_1R$ is maximal among all these submodules, that is, it is not properly contained
in any of these submodules (the situation is indeed symmetric in~$f_i$ and~$\pi_i$ because the latter are the coordinate
functions of the identity map on~$\cC$).
Now the identity $f_1=\pi_{\tau(1)}s_1$ with $s_1\in S$ along with maximality of~$f_1$ implies  $f_1R=\pi_{\tau(1)}R$ and
Theorem~\ref{T-LeftSim} yields the existence of some unit $u_1\in R^*$ such that $f_1=\pi_{\tau(1)}u_1$.
Hence in both cases we arrived at
\begin{equation}\label{e-f1pi}
    f_1=\pi_{\tau(1)}s_1,\ \text{ where $s_1\in R^*$, and where $s_1\in S$ if~$S$ is a subgroup of~$R^*$.}
\end{equation}
Now we return to~\eqref{e-charSum}.
The set~$Q$ is the disjoint union of the subsets $Q_i=\{s e_i\mid\in S\},\,i=1,\ldots,n$.
Moreover,
\begin{align*}
  \sum_{y\in Q_1}\chi(\ideal{f(\sbt),y})&=\sum_{s\in S}\chi(\ideal{f(\sbt),s e_1})=\sum_{s\in S}\chi(\ideal{f(\sbt),e_1}s)
        =\sum_{s\in S}\chi(\ideal{\sbt,s_1 e_{\tau(1)}}s)\\
        &=\sum_{s\in S}\chi(\ideal{\sbt,s_1s e_{\tau(1)}})=\sum_{y\in Q_{\tau(1)}}\chi(\ideal{\sbt,y}),
\end{align*}
where the very last step follows from the identity $S=s_1 S$, which is true in both cases for the set~$S$ and by the choice of~$s_1$ as in~\eqref{e-f1pi}.

As a consequence, the identity~\eqref{e-charSum} may be reduced to
\begin{equation}\label{e-charSum2}
   \sum_{y\in Q\setminus Q_1}\chi(\ideal{f(\sbt),y})=\sum_{y\in Q\setminus Q_{\tau(1)}}\chi(\ideal{\sbt,y}).
\end{equation}
Now we repeat the argument with the character $\chi(\ideal{f(\sbt),e_2})$ appearing on the left hand side of~\eqref{e-charSum2}.
It must also appear on the right hand side and as above we obtain
\[
    f_2=\pi_{\tau(2)}s_2,\ \text{ where $s_2\in R^*$, and $s_2\in S$ if $S$ is a subgroup of~$R^*$.}
\]
Clearly, $\tau(2)\neq\tau(1)$.
Proceeding in this manner we arrive at the desired result.
\end{proof}

Now we obtain the extension property for the Hamming weight.
This has also been established earlier by Greferath et al.~\cite[Thm.~4.15]{GNW04}.
\begin{thm}\label{T-MacWEHamm}
The Hamming weight on~$M^n$ satisfies the MacWilliams extension property for all $n\in\N$.
In other words, for every code $\cC\subseteq \lR{M^n}$ and every Hamming-weight preserving linear map $\lR{\cC}\longrightarrow\lR{M^n}$ there exists a
monomial matrix $A\in\Mon_n(R)$ such that $f(v)=vA$ for all $v\in\cC$.
\end{thm}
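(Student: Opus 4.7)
The plan is to deduce the theorem as an immediate consequence of Theorem~\ref{T-PartExt}(b), applied to the Hamming partition $\cP := \cP_{M^n,\text{Ham}}$ with the subset $S := R\setminus\{0\}$. First I would observe that any Hamming-weight preserving linear map $f:\cC\longrightarrow M^n$ trivially preserves the Hamming partition, i.e.\ $v\sim_{\cP}f(v)$ for all $v\in\cC$. This is just the statement that $\wtH(v)=\wtH(f(v))$. Next, I would invoke Example~\ref{E-HammPart} to record two facts: the Hamming partition $\cP$ on $M^n$ is reflexive, and its $\chi$-left dual is the Hamming partition $\cP_{R^n,\text{Ham}}$ on $R^n$.

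The second step is to verify the remaining hypotheses of Theorem~\ref{T-PartExt}(b). The set
\[
   Q=\{se_i\mid s\in S,\ i=1,\ldots,n\}
\]
is precisely the set of vectors in $R^n$ of Hamming weight $1$, and thus forms a single block of $\wcPchil=\cP_{R^n,\text{Ham}}$. Clearly $1\in S$, and $R^*S=S$ because multiplication by a unit is a bijection on $R\setminus\{0\}$. Hence Theorem~\ref{T-PartExt}(b) produces a monomial matrix $A\in\Mon_n(R)$ with $f(v)=vA$ for all $v\in\cC$, and the map $v\longmapsto vA$ on $M^n$ is the required extension. Note that the weight-preserving property of this extension is automatic from $A\in\Mon_n(R)$ via Proposition~\ref{P-Isom}(a).

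In this formulation there is no serious obstacle to overcome; the entire substance of the theorem has already been absorbed into the reflexive-partition machinery and the dualization results of Section~\ref{S-Partitions}. The only delicate point worth double-checking is that the set $Q$ really is a single block of $\wcPchil$ rather than a union of blocks — and this is immediate from the description of the Hamming partition on $R^n$, where all weight-$1$ vectors lie in the same block. Thus the proof reduces to a one-line application of Theorem~\ref{T-PartExt}(b).
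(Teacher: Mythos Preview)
Your proposal is correct and follows essentially the same argument as the paper's own proof: both apply Theorem~\ref{T-PartExt}(b) to the Hamming partition with $S=R\setminus\{0\}$, invoking Example~\ref{E-HammPart} for reflexivity and the identification of the dual partition, and noting that $Q$ is the block of weight-$1$ vectors. Your write-up is slightly more explicit (e.g.\ citing Proposition~\ref{P-Isom}(a) for the extension being $\wtH$-preserving), but there is no substantive difference.
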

\begin{proof}
The Hamming partition~$\cPHam$ on~$M^n$ is reflexive and its left $\chi$-dual is again the Hamming partition; see Example~\ref{E-HammPart}.
Consider the set $S=R\setminus\{0\}$.
Then the set~$Q$ of Theorem~\ref{T-PartExt} is exactly the set of vectors of Hamming weight~$1$ and thus a block of~$\widehat{\cPHam}^{[\chi,l]}$.
Moreover,~$S$ satisfies Condition~(b) of Theorem~\ref{T-PartExt} and, since every Hamming-weight preserving map clearly preserves $\cPHam$ in
the sense of~\eqref{e-vfv}, the result follows.
\end{proof}

We now turn to symmetrized weight compositions in the following sense.
Recall that $\Aut(\lR{M})\cong R^*$ in the natural way; see~\eqref{e-EndMn}.
\begin{defi}
Let $G\leq R^*$ be a subgroup of~$R^*$ and denote by $M/G$ the orbit space of the action $v\longmapsto vu$.
For a vector $v= (v_1,\ldots ,v_n) \in M^n$ and $m\in M/G$ define
\[
   \swc_m(v) = |\{i : v_i \in m\}|.
\]
The \emph{symmetrized weight composition (with respect to~$G$)} of a vector $v \in M^n$ is defined as
$\swc_G(v):=\big(\swc_m(v)\big)_{m\in M/G}$.
It encodes the number of entries of~$v$ that are contained in each of the orbits.
\end{defi}

Observe that $\swc_G$ is a refinement of the Hamming weight: if $\swc_G(v)=\swc_G(w)$ for some $v,w\in M^n$, then $\wtH(v)=\wtH(w)$, and
thus the partition induced by $\swc_G$ is finer than the Hamming weight partition.
As a special case $\swc_{\{1\}}(v)$ is the \emph{complete weight} of~$v$, i.e., it counts the number of entries equal to a
given module element; see \cite[p.~142]{MS77}.

We consider now linear maps that preserve the symmetrized weight composition.
Using Theorem~\ref{T-PartExt} we can establish the extension theorem for such maps.
It implies in particular that these maps are given by monomial matrices.
The following result has been proven in a different manner by ElGarem et al.~\cite[Thm.~13]{EMW15} and also appears
in \cite[Thm.~8.1]{Wo09} by Wood, where it was established with the aid of averaging characters.

\begin{thm}\label{P-swcMn}
Let $\cC\subseteq\lR{M^n}$ and $f:\lR{\cC} \longrightarrow\lR{M^n}$ be a linear map and let $G\leq R^*$ be a subgroup.
Suppose~$f$ preserves the symmetrized weight composition, i.e.,  $\swc_G\big(f(v)\big)=\swc_G(v)$ for all $v\in\cC$.
Then there exists a matrix $A\in\Mon_{G,n}(R)$ such that $f(v)=vA$ for all $v\in\cC$.
As a consequence, the $\swc_G$-preserving maps on~$M^n$ are given by $G$-monomial maps.
\end{thm}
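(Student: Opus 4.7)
The plan is to recognize the partition on $M^n$ induced by $\swc_G$ as an orbit partition of a suitable matrix group and then invoke Theorem~\ref{T-PartExt}(a) with $S=G$.

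First I would identify the $\swc_G$-partition $\cP$ on $M^n$ with the orbit partition $\cP_{M^n,\,\Mon_{G,n}(R)}$ of the right action of the group of $G$-monomial matrices. Indeed, two vectors have the same symmetrized weight composition precisely when one can be obtained from the other by permuting coordinates and multiplying each resulting entry on the right by a scalar from~$G$, and this is exactly the right action of a $G$-monomial matrix. By Theorem~\ref{T-UPart}, this partition is reflexive, and the hypothesis that $f$ preserves $\swc_G$ translates directly into the condition $v\widesim_\cP f(v)$ for all $v\in\cC$ required by Theorem~\ref{T-PartExt}.

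Next I would compute the $\chi$-left dual using Theorem~\ref{T-UPart}, which yields $\wcPchil = \cP_{R^n,\,\Mon_{G,n}(R)^\top}$, the orbit partition of~$R^n$ under the left action $r\longmapsto (Ar\T)\T$ for $A\in\Mon_{G,n}(R)$. Since $(Ae_1\T)\T$ is the transpose of the first column of~$A$, and this column contains exactly one nonzero entry lying in~$G$, as $A$ ranges over $\Mon_{G,n}(R)$ the orbit of~$e_1$ traces out exactly $Q = \{s e_i\mid s\in G,\,i=1,\ldots,n\}$. Hence $Q$ is a block of $\wcPchil$ of the form required in Theorem~\ref{T-PartExt}.

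Since $S = G$ is a subgroup of~$R^*$, all hypotheses of Theorem~\ref{T-PartExt}(a) are in place, and one concludes the existence of a matrix $A\in\Mon_{G,n}(R)$ with $f(v) = vA$ for all $v\in\cC$. The main thing to be careful about is the identification of the $\swc_G$-classes with the orbits of~$\Mon_{G,n}(R)$; once this matching is verified, the heavy lifting is entirely absorbed by Theorem~\ref{T-PartExt}, and the second claim of the theorem follows as well since any $G$-monomial matrix clearly preserves~$\swc_G$.
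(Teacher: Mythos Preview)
Your proposal is correct and follows essentially the same approach as the paper's proof: identify the $\swc_G$-partition with the orbit partition $\cP_{M^n,\cU}$ for $\cU=\Mon_{G,n}(R)$, invoke Theorem~\ref{T-UPart} for reflexivity and to compute $\wcPchil=\cP_{R^n,\cU^{\sf T}}$, verify that $Q=\{se_i\mid s\in G,\,i=1,\ldots,n\}$ is a block of the dual, and apply Theorem~\ref{T-PartExt}(a). Your write-up is in fact slightly more explicit than the paper's in justifying why the $\swc_G$-classes coincide with the $\cU$-orbits and why the orbit of~$e_1$ under the left action is exactly~$Q$.
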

\begin{proof}
Let $U=\Mon_{G,n}(R)\leq\GL_n(R)$ and consider the partition $\cP=\cP_{M^n,U}$ on~$M^n$.
The assumption on~$f$ implies that $f(v)\widesim_{\mathcal P}v$ for all $v\in \cC$.
Furthermore, by Theorem~\ref{T-UPart} this partition is reflexive and $\wcPchil=\cP_{R^n,U^{\sf T}}$.
This shows that $Q:=\{\alpha e_i\mid i=1,\ldots,n,\,\alpha\in G\}$ is a block of $\wcPchil$.
Now Theorem~\ref{T-PartExt}(a) concludes the proof.
\end{proof}

\begin{rem}\label{R-UPreserving}
In this context the reader may wonder about subgroups $\cU\subseteq\GL_n(R)$ and its orbits in $M^n$.
What can be said about linear maps preserving these orbits?
Suppose $\cC\subseteq\lR{M^n}$ and $f:\cC\longrightarrow M^n$ is a linear map such that
\[
   v\widesim_{\cP_{M^n,\cU}}f(v)\ \text{ for all }v\in\cC.
\]
This translates into the property: for all $v\in\cC$ there exists a matrix $U_v\in\cU$ such that $f(v)=vU_v$.
However, the latter does not necessarily imply that there exists a global matrix $U\in\cU$ such that $f(v)=vU$ for all $v\in\cC$.
An example for the case where~$M$ is a finite field is given in \cite[Ex.~2.6]{BGL15}.
In~\cite{BGL15} the authors discuss which subgroups of~$\GL_n(R)$ satisfy such a `local-global property' for maps on~$R^n$, where~$R$
is a Frobenius ring.
See also Remark~\ref{R-OtherWeights}(b) for a special case where the extension property does hold true.
\end{rem}

The methods employed above may also be used to prove the extension theorem for the homogeneous weight on~$M^n$;
see Definition~\ref{D-homogWt} and~\eqref{e-AddWeights}.
The following result appears also in~\cite[Thm.~4.15]{GNW04} by Greferath et al.
We provide an alternative, and much shorter, proof which makes use of the character-based formula for the homogeneous weight derived
in Theorem~\ref{T-HomogChar}.
Recall from Example~\ref{E-HomogRingMod}(a) that the partition induced by the homogeneous weight is in general not reflexive.
Therefore, a direct application of Theorem~\ref{T-PartExt} is not possible.
But, as we will see, we can derive an identity that allows us to continue in the same manner as in the proof of that theorem.

\begin{thm}\label{T-HomogWeightExt}
Let $\cC\subseteq\lR{M^n}$ and $f:\lR{\cC} \longrightarrow\lR{M^n}$ an $\omega$-preserving map, where~$\omega$ is the
homogeneous weight on~$M$ extended additively to~$M^n$.
Then there exists some matrix $A\in\Mon_n(R)$ such that $f(v)=vA$ for all $v\in\cC$.
In particular,~$f$ extends to a $\omega$-preserving map on~$M^n$.
As a consequence, a linear map $f:\cC \longrightarrow M^n$  preserves the homogeneous weight iff it preserves the Hamming weight.
\end{thm}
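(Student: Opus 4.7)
The plan is to apply Theorem~\ref{T-HomogChar} to recast the hypothesis $\omega(f(v))=\omega(v)$ as precisely the character-sum identity~\eqref{e-charSum} that drives the proof of Theorem~\ref{T-PartExt}, and then to run that iterative argument verbatim. The homogeneous-weight partition itself is not reflexive (Example~\ref{E-HomogRingMod}(a)), so Theorem~\ref{T-PartExt} cannot be applied as a black box; but we only need its proof, and the input it needs is a character identity, which we can manufacture directly.

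First, I would use the character formula for~$\omega$ together with the additivity~\eqref{e-AddWeights}. For any $v\in M^n$,
\[
  \omega(v) = n - \frac{1}{|R^*|}\sum_{i=1}^n\sum_{\alpha\in R^*}\chi(v_i\alpha),
\]
and the analogous expression holds for $f(v)$. The hypothesis $\omega(f(v))=\omega(v)$ for every $v\in\cC$ therefore rearranges to
\[
  \sum_{i=1}^n\sum_{\alpha\in R^*}\chi(f(v)_i\alpha)
  = \sum_{i=1}^n\sum_{\alpha\in R^*}\chi(v_i\alpha).
\]
Setting $Q := \{\alpha e_i \mid \alpha\in R^*,\,i=1,\ldots,n\}\subseteq R^n$ and observing that $\chi(v_i\alpha)=\chi(\ideal{v,\alpha e_i})$, this is exactly the identity
\[
  \sum_{y\in Q}\chi(\ideal{f(\sbt),y}) = \sum_{y\in Q}\chi(\ideal{\sbt,y})
\]
of sums of characters on~$\cC$, i.e., identity~\eqref{e-charSum} with $S=R^*$.

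From here, the argument of Theorem~\ref{T-PartExt} applies verbatim. By Remark~\ref{R-CharGroup}(6), the character $\chi(\ideal{f(\sbt),e_1})$ on the left must appear on the right, so $\chi(\ideal{f(\sbt),e_1})=\chi(\ideal{\sbt,u_1 e_{\tau(1)}})$ for some $u_1\in R^*$ and index $\tau(1)$. Theorem~\ref{T-FrobBimod} then yields the equality $f_1=\pi_{\tau(1)}u_1$ of maps in $\Hom(\lR{\cC},\lR{M})$. Since $R^*u_1=R^*$, the subsums over $Q_1=\{\alpha e_1\mid \alpha\in R^*\}$ on the left and $Q_{\tau(1)}$ on the right cancel, and iterating produces a permutation $\tau\in S_n$ together with units $u_1,\ldots,u_n\in R^*$ satisfying $f_i=\pi_{\tau(i)}u_i$. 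This precisely describes a matrix $A\in\Mon_n(R)$ with $f(v)=vA$ for all $v\in\cC$; the map $v\longmapsto vA$ on all of $M^n$ then supplies the desired extension.

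For the final equivalence, any Hamming-preserving linear map is of the form $v\mapsto vA$ with $A\in\Mon_n(R)$ by Theorem~\ref{T-MacWEHamm}, and such a map preserves~$\omega$ since Definition~\ref{D-homogWt}(i) gives $\omega(uw)=\omega(w)$ for $u\in R^*,\,w\in M$. The converse is what we just proved. The only point that requires care is the reduction step at the beginning: one has to see that although the primal partition induced by~$\omega$ fails to be reflexive, the specific character sum produced by Theorem~\ref{T-HomogChar} is nevertheless of the exact shape needed to enter the Theorem~\ref{T-PartExt} machinery, so no reflexivity hypothesis is actually invoked.
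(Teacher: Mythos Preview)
Your proposal is correct and follows essentially the same approach as the paper: both use Theorem~\ref{T-HomogChar} to rewrite $\omega(v)=n-\frac{1}{|R^*|}\sum_{y\in Q}\chi(\ideal{v,y})$ with $Q=\{\alpha e_i\mid\alpha\in R^*,\,i=1,\ldots,n\}$, observe that the $\omega$-preserving hypothesis is exactly identity~\eqref{e-charSum} for $S=R^*$, and then run the iterative argument from the proof of Theorem~\ref{T-PartExt} (noting, as you do, that reflexivity is not needed once~\eqref{e-charSum} is in hand). The paper handles the final equivalence slightly more tersely by first remarking that monomial maps preserve~$\omega$, but the content is the same.
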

\begin{proof}
First of all, by the very definition of the homogeneous weight on~$M^n$ it is clear that for every $A\in\Mon_n(R)$ the map
$v\longmapsto vA$ is $\omega$-preserving.
This shows the last two statements of the theorem.
Next, define $Q=\{\alpha e_i\mid \alpha\in R^*,\,i=1,\ldots,n\}$.
From Theorem~\ref{T-HomogChar} we obtain
$\omega(v_1,\ldots,v_n)=n-\frac{1}{|R^*|}\sum_{i=1}^n\sum_{\alpha\in R^*}\chi(v_i\alpha)=
    n-\frac{1}{|R^*|}\sum_{y\in Q}\chi(\ideal{v,y})$.
Thus $f$ is $\omega$-preserving iff
\[
  \sum_{y\in Q}\chi(\ideal{f(\sbt),y})=\sum_{y\in Q}\chi(\ideal{\sbt,y}).
\]
But the latter is exactly~\eqref{e-charSum} for the subgroup $S=R^*$.
Proceeding as in the proof of Theorem~\ref{T-PartExt} we arrive at the desired matrix $A\in\Mon_n(R)$ such that $f(v)=vA$ for all $v\in\cC$.
\end{proof}

One should also note that by the above result, any linear $\omega$-preserving map $f:\cC \longrightarrow M^n$ is injective.
This is not a priori clear because nonzero vectors in~$M^n$ may have zero homogeneous weight, see Example~\ref{E-HomogRingMod}(a),
and thus could be in the kernel of~$f$.

Finally, we discuss a weight function that does not satisfy in general the extension property.
\begin{rem}\label{R-WtN}
Let $N\subseteq\lR{M}$ be a submodule and for $v\in M^n$  define the $N$-weight
\[
   \wtN(v)=|\{i\mid v_i\in N\}|.
\]
Thus, $\wtN$ counts the number of entries in the submodule~$N$. For which~$N$ does $\wtN$ satisfy the extension property?
We have two extreme cases.
Firstly, if~$N=\{0\}$, then $\wtN(v)=n-\wtH(v)$, and thus~$\wtN$ satisfies the extension property by Theorem~\ref{T-MacWEHamm}.
Next, if $N=M$, then $\wtN(v)=n$ for all $v\in M^n$ and the extension property simply requires that any linear map
$\cC\longrightarrow M^n$ extends to a linear map on~$M^n$.
This is indeed the case by injectivity of the module $M^n$.
However, for nontrivial choices of~$N$ the weight does in general not satisfy the extension property.
This is most easily seen by choosing $\cC= N^n$.
Then of course the zero map $\cC\longrightarrow R^2$ is $\wtN$-preserving and does not extend as such.
But even an injective $\wtN$-preserving map does not extend in general.
We provide an example.
Let $M=R:=\Z_{24}$ and $N=(6)=\{0,6,12,18\}$. Consider the map
\[
  f:R^2\longrightarrow R^2,\ (x,y)\longmapsto (x,y)\begin{pmatrix}2&1\\3&1\end{pmatrix}.
\]
Since the rightmost matrix is in $\GL_2(R)$ this is even an isomorphism.
The map is not $\wtN$-preserving because, for instance, $f(3,0)=(6,3)$.
Choose $\cC=\{(0,y)\mid y\in (2)\}$.
Then $f|_{\cC}:\cC\longrightarrow R^2$ is clearly $\wtN$-preserving and injective.
Suppose now that $\hat{f}: R^2\longrightarrow R^2$ is a $\wtN$-preserving extension of $f|_{\cC}$.
Then by linearity there exists a matrix $A:=\Smallfourmat{a}{b}{c}{d}\in \cM_2(R)$ such that $\hat{f}(x,y)=(x,y)A$.
Using $\hat{f}|_{\cC}=f|_{\cC}$, one concludes that $c\in \{3,15\}$, and $d\in \{1,13\}$.
But then one can easily verify that for each such choice of~$A$ the resulting map is not $\wtN$-preserving on~$R^2$.
\end{rem}

We close the section with the following overview of further cases where the extension property can be established with the
methods developed in this paper.
We refer to~\cite{BGL15} for details in the case where $M=R$ is a Frobenius ring.
\begin{rem}\label{R-OtherWeights}\
\begin{alphalist}
\item Let $f:\cC\longrightarrow M^n$ be a support-preserving linear map, that is, $\supp(x)=\supp(f(x))$ for all $x\in\cC$, and
      where $\supp(x)$ is defined in the obvious way.
      Then~$f$ extends to a support-preserving map on~$M^n$ and these maps are given by right multiplication with a diagonal matrix
      in~$\GL_n(R)$.
      This is shown similarly to Theorem~\ref{T-RTExt}, see also \cite[Thm.~6.3]{BGL15}.
\item For $i=1,\ldots,n$ let $G_i$ be a subgroup of~$R^*$ and let $\cP_i:=\cP_{M,G_i}$ be the orbit partition of~$M$ induced by~$G_i$ in the sense of
       Definition~\ref{D-UAction} (for $n=1$).
       Let $f:\cC\longrightarrow M^n$ be a linear map such that for any $x=(x_1,\ldots,x_n)\in\cC$ and
       $f(x)=(y_1,\ldots,y_n)$ we have $x_i\widesim_{\cP_i}y_i$ for $i=1,\ldots,n$.
       That is,~$f$ preserves the $G_i$-orbit in each coordinate.
       Then~$f$ extends to an automorphism on~$M^n$ with the same preserving property, and~$f$ is given by right multiplication with a
       diagonal matrix in $\GL_n(R)$ where the $i$th diagonal entry is from~$G_i$.
       This can be shown similar to the proof of Theorem~\ref{T-RTExt} along with the fact that~$f$ is support-preserving, see~(a),
       and Lemma~\ref{L-AiInner}.
\item Suppose $M^n=M^{n_1}\times\cdots\times M^{n_t}$ and each~$M^{n_i}$ is endowed with a weight function~$\wt_i$ that satisfies the
      extension property and such that $\wt_i(x)=0\Leftrightarrow x=0$ for all~$x\in M^{n_i}$.
      Let $f:\cC\longrightarrow M^n$ be a linear map such that $(\wt_1(x_1),\ldots,\wt_t(x_t))=(\wt_1(f_1(x)),\ldots,\wt_t(f_t(x)))$ for all
      $(x_1,\ldots,x_t)\in\cC$ and where $f_i$ is the $i$th coordinate map of~$f$.
      Then~$f$ extends to an automorphism on~$M^n$ with the same weight-list-preserving property.
      This follows straightforwardly as in \cite[Rem.~6.8]{BGL15}.
\item Finally, the notion of poset weights~\cite{BGL95,KiOh05,PFKH08} can be naturally extended to codes over module alphabets, and it follows that
      a poset weight satisfies the extension property if and only if the poset is hierarchical.
      This can be established -- with some straightforward modifications -- as in \cite[Sec.~7]{BGL15}, where it was shown for Frobenius ring alphabets.
      This result generalizes both Theorem~\ref{T-RTExt} and Theorem~\ref{T-MacWEHamm}.
      See also the next section for a similar, but much simpler, situation.
\end{alphalist}
\end{rem}

\section{The RT-Weight for Sublinear Maps}
In \cite[Thm.~5.2]{Wo09} Wood proved that the Hamming weight satisfies the extension property on a module alphabet if and only if the module
is pseudo-injective and has a cyclic socle.
This means in particular that the extension property fails in the simple case where the alphabet is a proper field extension
$\F_{q^r}$ over a field $\F_q$ (thus $r>1$) because, clearly, the socle of the $\F$-vector space $\F_{q^r}$ is not cyclic.
A nice simple example of a non-extendable $\F_q$-linear $\wtH$-preserving map was found by Dyshko~\cite[Ex.~5]{Dys15}.
(However, in the same paper the author showed that the extension property does hold for codes in $\F_{q^r}^n$ if $n<q$.)

In this section we will show that, different from the Hamming weight, the RT-weight does satisfy the extension property for
$\F_q$-subspaces~$\cC$ of $\F_{q^r}^n$
and $\F_q$-linear maps $f:\cC\longrightarrow \F_{q^r}^n$ and any length~$n$.
Precisely, we consider a field extension $\hat{\F}:=\F_{q^r}$ of~$\F:=\F_q$ and
endow $\hat{\F}^n$ with the RT-weight from Definition~\ref{D-RTWeight}.
Moreover, we consider $\F$-linear maps on $\hat{\F}^n$ or $\F$-linear subspaces thereof and call such maps \textit{sublinear}.
Note that~$\hat{\F}$ is not an $(\F,\F)$-Frobenius bimodule.
However, the methods derived earlier can be utilized to prove the following result.
\begin{thm}\label{T-RTSublinear}
Let $\cC\subseteq\hat{\F}^n$ be an $\F$-subspace of $\hat{\F}^n$ and let $f:\cC\longrightarrow\hat{\F}^n$ be an $\F$-linear map
that preserves the RT-weight on $\hat{\F}^n$.
Then~$f$ extends to an RT-weight-preserving $\F$-linear map on~$\hat{\F}^n$.
\end{thm}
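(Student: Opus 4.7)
The plan is to mimic the strategy of Theorems~\ref{T-RTLT} and~\ref{T-RTExt} directly, but without invoking the double annihilator property, which is unavailable here since $\hat{\F}$ is not an $(\F,\F)$-Frobenius bimodule (one would need $|\hat{\F}|=|\F|$). The replacement is the purely linear-algebraic fact that any injective $\F$-linear map from an $\F$-subspace of $\hat{\F}$ into $\hat{\F}$ extends to an $\F$-linear automorphism of $\hat{\F}$, because $\hat{\F}$ is a finite-dimensional $\F$-vector space.

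The starting observation is that $f$ being $\F$-linear and RT-preserving forces $f(\cC^{(j)}) \subseteq V_j$ for every $j$, where $V_j := \{w \in \hat{\F}^n \mid w_{j+1}=\cdots=w_n=0\}$ and $\cC^{(j)} := \cC \cap V_j$. Consequently, the coordinate function $v \longmapsto f(v)_j$ vanishes on $\cC^{(j-1)}$, so it depends only on $(v_j,v_{j+1},\ldots,v_n)$; this is the analogue of the ``lower triangular'' shape of the representing matrix in Theorem~\ref{T-RTLT}. I next isolate the diagonal part: setting $E_j := \{v_j \mid v \in \cC^{(j)}\}$, one obtains a well-defined $\F$-linear map $A_{jj}^0:E_j \longrightarrow \hat{\F}$, $v_j \longmapsto f(v)_j$, and it is injective because if $v_j \neq 0$ with $v\in\cC^{(j)}$ then $\wtRT(v)=j$, forcing $f(v)_j \neq 0$. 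By the linear-algebra input above, $A_{jj}^0$ extends to an $\F$-linear automorphism $A_{jj}$ of $\hat{\F}$. To handle the off-diagonal parts, subtract the diagonal contribution: define $h_j$ on $\pi_{\geq j}(\cC)$ by $h_j(v_j,\ldots,v_n) := f(v)_j - A_{jj}(v_j)$. By construction $h_j$ vanishes on all tuples $(v_j,0,\ldots,0)\in\pi_{\geq j}(\cC)$, hence factors through the further projection onto the coordinates $j+1,\ldots,n$; the resulting $\F$-linear map extends, via a choice of $\F$-linear complement, to an $\F$-linear map $\hat{\F}^{n-j} \longrightarrow \hat{\F}$, which in turn decomposes as $(w_{j+1},\ldots,w_n)\longmapsto \sum_{i>j} A_{ij}(w_i)$ with $A_{ij}\in \End_{\F}(\hat{\F})$.

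Assembling everything, set $F(v)_j := A_{jj}(v_j) + \sum_{i>j} A_{ij}(v_i)$ for $j=1,\ldots,n$. This defines an $\F$-linear map $F:\hat{\F}^n \longrightarrow \hat{\F}^n$ with $F|_{\cC}=f$, and its lower-triangular shape together with invertibility of each diagonal $A_{jj}$ immediately yields $\wtRT(F(v)) = \wtRT(v)$ for all $v$. The proof is not deep; the main points requiring care are the order of the constructions (the diagonal pieces $A_{jj}$ must be fixed \emph{before} the off-diagonal pieces $A_{ij}$ are extracted from $h_j$) and the verification that $h_j$ really does kill the $j$-th coordinate so that the off-diagonal extension is unobstructed. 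It is worth noting that this argument relies crucially on the linear ordering of the RT-weight, which pins down $f$ coordinate by coordinate via the chain $V_0\subset V_1\subset\cdots\subset V_n$; no such chain exists for the Hamming weight, consistent with the counterexample in~\cite[Ex.~5]{Dys15}.
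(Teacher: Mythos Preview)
Your argument is correct and is genuinely different from the paper's proof.

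The paper proceeds by first passing through the $\F$-isomorphism $\phi:\hat{\F}^n\longrightarrow(\F^r)^n$ and working over the matrix ring $R=\cM_r(\F)$; it then inducts on~$n$, using the induction hypothesis to fix the upper-left $(n{-}1)\times(n{-}1)$ block and a separate argument for $A_{nn}$, after which it produces \emph{local} matrices $A_v\in\LT_n(R)$ and normalizes by the diagonal so that each $A_v(A'')^{-1}$ lies in $\LT_{rn}(\F)$, finally invoking Theorem~\ref{T-RTExt} (and hence, implicitly, the character-theoretic Lemma~\ref{L-AiInner}) to pass to a global matrix. By contrast, you work directly in $\hat{\F}^n$ with no induction and no appeal to the Frobenius-bimodule machinery: you fix the diagonal automorphisms $A_{jj}$ first by extending the injective maps $E_j\longrightarrow\hat{\F}$, then subtract them off so that the residual map $h_j$ factors through the coordinates $j{+}1,\ldots,n$, and finally extend $h_j$ by choosing a complement. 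This is a cleaner and more elementary construction, relying only on the fact that $\hat{\F}$ is a finite-dimensional $\F$-vector space; the paper's route, on the other hand, illustrates how the earlier RT-extension theorem for Frobenius bimodules can be recycled once the diagonal has been made uniform. The key structural observation---that the chain $V_0\subset V_1\subset\cdots\subset V_n$ lets one peel off coordinates one at a time---is the same in both arguments.
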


We need some preparation.
Let $\phi:\hat{\F}\longrightarrow\F^r$ be any $\F$-isomorphism.
Then $\hat{\F}^n$ is isomorphic as an~$\F$-vector space to $(\F^r)^n=\F^{rn}$ via
$(x_1,\ldots,x_n)\longmapsto(\phi(x_1),\ldots,\phi(x_n))$, and we call this isomorphism again~$\phi$.
We denote the RT-weight (see Definition~\ref{D-RTWeight}) on~$(\F^r)^n$ by
$\wtRTr(v_1,\ldots,v_n)$.
Then clearly $\wtRTr(v_1,\ldots,v_n)=\wtRT(\phi^{-1}(v_1),\ldots,\phi^{-1}(v_n))$ for all $v_i\in\F^r$.
Moreover, for every linear map $f:\cC\longrightarrow\hat{\F}^n$, where $\cC\subseteq\hat{\F}^n$, we have that $f$ is $\wtRT$-preserving iff
$f':=\phi\circ f\circ\phi^{-1}:\phi(\cC)\longrightarrow\F^{rn}$ is $\wtRTr$-preserving.
Define $R$ as the matrix ring $R=\cM_r(\F)$.
We will make use of the ring $\LT_n(R)$ defined in~\eqref{e-LTR}, which in this case is thus the ring of all
invertible lower triangular block matrices (which thus have matrices from $\GL_r(\F)$ on their diagonal).

Now we can prove the following preliminary result.

\begin{prop}\label{P-RTSublinear}
Let $f:\hat{\F}^n\longrightarrow\hat{\F}^n$ be an $\F$-linear map
that preserves the RT-weight on $\hat{\F}^n$.
Define $f':=\phi\circ f\circ\phi^{-1}:\F^{rn}\longrightarrow\F^{rn}$.
Then $f'$ is $\wtRTr$-preserving and there exists a matrix $A\in\LT_n(R)$ such that $f(v)=vA$ for all $v\in\F^{rn}$.
\end{prop}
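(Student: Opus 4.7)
The plan is to reduce everything to a statement about $f'$ alone and then to exploit the fact that the RT-weight naturally defines a complete flag of $\F$-subspaces of $\F^{rn}$, from which the desired block lower triangular structure of $f'$ falls out by elementary linear algebra, without requiring any Frobenius bimodule machinery on the alphabet $\hat{\F}$.

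First I would verify that $f'$ is $\wtRTr$-preserving. The relation $\wtRTr(w_1,\ldots,w_n)=\wtRT(\phi^{-1}(w_1),\ldots,\phi^{-1}(w_n))$ recorded in the excerpt combined with the $\wtRT$-preservation of $f$ gives
\[
   \wtRTr(f'(v))=\wtRT\!\big(f(\phi^{-1}(v))\big)=\wtRT\!\big(\phi^{-1}(v)\big)=\wtRTr(v)
\]
for every $v\in\F^{rn}$. (I note in passing that the proposition contains a small typo: since $f$ is defined on $\hat{\F}^n$ while $vA$ lives in $\F^{rn}$, the conclusion should read $f'(v)=vA$ for all $v\in\F^{rn}$.)

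Next I would introduce, for $j=0,1,\ldots,n$, the $\F$-subspaces
\[
   V_j=\{v=(v_1,\ldots,v_n)\in(\F^r)^n\mid v_i=0\text{ for all }i>j\}=\{v\in\F^{rn}\mid \wtRTr(v)\leq j\},
\]
which form a flag $0=V_0\subset V_1\subset\cdots\subset V_n=\F^{rn}$ with $\dim_\F V_j=rj$. Since $f'$ is $\F$-linear and $\wtRTr$-preserving, $f'(V_j)\subseteq V_j$ for every $j$, so $f'$ induces an $\F$-linear endomorphism $\bar{f'}_j:V_j/V_{j-1}\to V_j/V_{j-1}$ of each quotient. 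The key observation is that each $\bar{f'}_j$ is injective: if $v\in V_j\setminus V_{j-1}$ then $\wtRTr(v)=j$, hence $\wtRTr(f'(v))=j$, which places $f'(v)$ in $V_j\setminus V_{j-1}$. As the quotients are finite-dimensional, $\bar{f'}_j$ is in fact an isomorphism of $V_j/V_{j-1}\cong\F^r$.

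Finally, writing $f'$ as right multiplication by a matrix with respect to the standard basis of $\F^{rn}=(\F^r)^n$ and grouping coordinates into $n$ blocks of size $r$, the preservation of the flag forces an $n\times n$ block lower triangular shape with $r\times r$ blocks in $\cM_r(\F)=R$, while the isomorphisms $\bar{f'}_j$ force the diagonal blocks to lie in $\GL_r(\F)$. This produces the required $A\in\LT_n(R)$ with $f'(v)=vA$ for all $v\in\F^{rn}$. I do not expect any genuine obstacle: the argument is in fact simpler than in Theorem~\ref{T-RTLT}, where the Frobenius bimodule hypothesis was needed to invoke the double annihilator property, because in the present situation each subquotient $V_j/V_{j-1}$ is automatically the entire $\F^r$, and no module-theoretic input beyond the flag is required.
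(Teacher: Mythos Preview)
Your proof is correct and follows essentially the same route as the paper: both arguments use the flag $V_j=\{v\mid \wtRTr(v)\le j\}$ induced by the RT-weight, observe that the $\wtRTr$-preserving map $f'$ respects this flag and is injective on each level, and read off the block lower triangular form with diagonal blocks in $\GL_r(\F)$. The only cosmetic difference is that the paper works with the restrictions $f'|_{V_i}$ directly (invoking the argument of Proposition~\ref{P-Isom}(b)) while you pass to the subquotients $V_j/V_{j-1}$; your observation about the typo $f(v)=vA$ versus $f'(v)=vA$ is also correct.
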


\begin{proof}
Clearly $f'$ is $\wtRTr$-preserving.
For the existence of a matrix~$A$  we follow the proof of Proposition~\ref{P-Isom}.
Consider the restrictions of~$f'$ to the subspaces
$\tau_i(\F^{rn})=\{(v_1,\ldots,v_i,0,\ldots,0)\mid v_j\in \F^r\}$.
Then~$f'$ induces $\wtRTr$-preserving automorphisms of~$\tau_i(\F^{rn})$.
Proceeding consecutively for $i=1,\ldots,n$ and using that the isomorphisms of~$\F^r$
are given by $\GL_r(\F)$ yields the desired result.
\end{proof}

Note that the above result does not mean that the map~$f:\hat{\F}^n\longrightarrow\hat{\F}^n$ is given by
right multiplication with a lower triangular matrix with entries in~$\hat{\F}$.
This is simply due to the fact that  $\GL_2(\F_2)\not\cong\F_4^*$.

\medskip
Now we are ready to prove Theorem~\ref{T-RTSublinear}.

\noindent {\it Proof of Theorem~\ref{T-RTSublinear}}.
Set $\cC':=\phi(\cC)$ and $f':=\phi\circ f\circ\phi^{-1}:\cC'\longrightarrow\F^{rn}$.
Then $f'$ is clearly~$\F$-linear and $\wtRTr$-preserving.
It suffices to show that~$f'$ extends to a $\wtRTr$-preserving linear map on~$\F^{rn}$.
We induct on~$n$.
Let~$n=1$. In this case the $\wtRTr$-preserving property simply means injectivity.
Since the map~$f'$ can clearly be extended to an isomorphism on~$\F^r$ the case $n=1$ is established.

Consider now the general case.
Set $\widehat{\cC} = \{(v_1,\ldots,v_{n-1})\mid (v_1, \ldots v_{n-1},0) \in \cC'\}\subseteq\F^{r(n-1)}$.
Note that if $f'(v_1,\ldots , v_{n-1},v_n) =(w_1,\ldots ,w_{n-1},w_n)$ then $v_n = 0$ if and only if $w_n = 0$.
This implies that the map $\widehat{f} : \widehat{\cC} \longrightarrow\F^{r(n-1)},\ (v_1,\ldots,v_{n-1})\longmapsto(w_1,\ldots,w_{n-1})$,
where  $(w_1,\ldots,w_{n-1})$ is such that $f'(v_1, \ldots v_{n-1},0)=(w_1, \ldots w_{n-1},0)$, is a $\wtRTr$-isometry.
By induction there exists a matrix
\[
   A'=\begin{pmatrix}
   A_{11} & 0 & \cdots & 0 \\
A_{21} & A_{22} & \cdots & 0 \\
\vdots & \ddots & \ddots & \vdots \\
A_{n-1,1} & \cdots & A_{n-1,n-2} & A_{n-1,n-1}
  \end{pmatrix}\in\LT_{n-1}(R)
\]
such that $\widehat{f}(v)=vA'$ for all $v\in\widehat{\cC}$.
This yields
\begin{equation}\label{e-Indn1}
   f'(v_1,\ldots,v_{n-1},0)=((v_1,\ldots,v_{n-1})A',0)\ \text{ for all }(v_1,\ldots,v_{n-1},0)\in\cC'.
\end{equation}
Let~$\pi_n$ be the projection on the $n$-th coordinate.
Then the map $f'_n :\pi_n(\cC') \longrightarrow \F^r$ given by $v_n \longmapsto\pi_n(f'(v_1,\ldots ,v_n))$,
where $(v_1,\ldots,v_n)$ is any vector in~$\cC'$ with last component~$v_n$, is well defined and
an isometry.
The base case provides us with a matrix $A_{n,n} \in\GL_r(\F)$ such that $f'_n(v_n) = v_nA_{n,n}$.

Our next step is to find for every $v\in\cC'$ a matrix $A_v\in\LT_n(R)$ such that $f'(v)=vA_v$.
In order to do so, fix $v=(v_1,\ldots,v_n)\in\cC'$ such that $v_n\neq0$ and set $f'(v)=(w_1,\ldots,w_n)$.
Then we can clearly find matrices $A_{n,j,v}\in\cM_r(\F)$ such that
\[
   w_j-\sum_{i=j}^{n-1} v_i A_{ij}=v_n A_{n,j,v}\ \text{ for }j=1,\ldots,n-1.
\]
Setting $A_{n,j,0}=0\in\cM_r(\F)$ we obtain matrices
\[
    A_v=\begin{pmatrix}
      A_{11} & 0 & \cdots & 0 \\
      \vdots & \ddots &\ddots  &\vdots \\
      A_{n-1,1} & \cdots  & A_{n-1,n-1}&0\\
      A_{n,1,v}&\cdots&A_{n,n-1,v}&A_{nn}
  \end{pmatrix}\in\LT_{n}(R)\text{ for all }v\in\cC
\]
which, by construction, satisfy $f'(v)=vA_v$ for all $v\in\cC'$.

It remains to find a global matrix~$A\in\LT_n(R)$ such that $f'(v)=vA$ for all $v\in\cC'$.
This is indeed possible because the matrices on the diagonal do not depend on~$v$.
Set $A'':=\text{diag}(A_{11},\ldots,A_{nn})\in\GL_n(R)$ and consider the map
$\tilde{f}:\cC'\longrightarrow\F^{rn},\ v\longmapsto vA_v(A'')^{-1}$, which is clearly $\wtRTr$-preserving.
It suffices to show that $\tilde{f}$ extends to a $\wtRTr$-preserving map on~$\F^{rn}$.
Since the diagonal blocks of $A_v(A'')^{-1}$ are identity matrices, this matrix is actually in $\LT_{rn}(\F)$ and
thus~$\tilde{f}$ is even $\wtRT$-preserving on~$\cC\subseteq\F^{rn}$.
By Theorem~\ref{T-RTExt} the map extends to a $\wtRT$-preserving linear map on~$\F^{rn}$.
Since such a map is clearly also $\wtRTr$-preserving, the proof is complete.
\mbox{}\hfill$\Box$

The same procedure as above may also be used to establish the extension property for certain poset weights and sublinear maps.
This results in generalizations of \cite[Thms.~7.4 and 7.6]{BGL15} to sublinear maps.

\bibliographystyle{abbrv}

\end{document}